\newcommand{\nop}[1]{}
\newif\if@restonecol
\newtheorem{theorem}{Theorem}
\newtheorem{definition}{Definition}
\def\@copyrightspace{\relax}
\begin{document}
%
% --- Author Metadata here ---
%\conferenceinfo{SAC'14}{March 24-28, 2014, Gyeongju, Korea.}
%\CopyrightYear{2014} % Allows default copyright year (2002) to be over-ridden - IF NEED BE.
%\crdata{978-1-4503-2469-4/14/03}  % Allows default copyright data (X-XXXXX-XX-X/XX/XX) to be over-ridden.
% --- End of Author Metadata ---

%\title{Comparing the Staples in Latent Factor Models}
\title{Buyer to Seller Recommendation under Constraints}
\nop{
}
%
% You need the command \numberofauthors to handle the 'placement
% and alignment' of the authors beneath the title.
%
% For aesthetic reasons, we recommend 'three authors at a time'
% i.e. three 'name/affiliation blocks' be placed beneath the title.
%
% NOTE: You are NOT restricted in how many 'rows' of
% "name/affiliations" may appear. We just ask that you restrict
% the number of 'columns' to three.
%
% Because of the available 'opening page real-estate'
% we ask you to refrain from putting more than six authors
% (two rows with three columns) beneath the article title.
% More than six makes the first-page appear very cluttered indeed.
%
% Use the \alignauthor commands to handle the names
% and affiliations for an 'aesthetic maximum' of six authors.
% Add names, affiliations, addresses for
% the seventh etc. author(s) as the argument for the
% \additionalauthors command.
% These 'additional authors' will be output/set for you
% without further effort on your part as the last section in
% the body of your article BEFORE References or any Appendices.

\numberofauthors{5} %  in this sample file, there are a *total*
% of EIGHT authors. SIX appear on the 'first-page' (for formatting
% reasons) and the remaining two appear in the \additionalauthors section.
%

\author{
% You can go ahead and credit any number of authors here,
% e.g. one 'row of three' or two rows (consisting of one row of three
% and a second row of one, two or three).
%
% The command \alignauthor (no curly braces needed) should
% precede each author name, affiliation/snail-mail address and
% e-mail address. Additionally, tag each line of
% affiliation/address with \affaddr, and tag the
% e-mail address with \email.
%
% 1st. author
%\affaddr{P.O. Box 1212}\\
\alignauthor
Cheng Chen\\
       \affaddr{University of Victoria}\\
       \affaddr{Victoria, Canada}\\
       \email{cchenv@uvic.ca}
% 2nd. author
\alignauthor
Lan Zheng\\
       \affaddr{University of Victoria}\\
       \affaddr{Victoria, Canada}\\
       \email{lanzheng@uvic.ca}
% 3rd. author
\alignauthor Venkatesh Srinivasan\\
       \affaddr{University of Victoria}\\
       \affaddr{Victoria, Canada}\\
       \email{venkat@cs.uvic.ca}
\and  % use '\and' if you need 'another row' of author names
% 4th. author
\alignauthor Alex Thomo\\
       \affaddr{University of Victoria}\\
       \affaddr{Victoria, Canada}\\
       \email{thomo@cs.uvic.ca}
% 5th. author
\alignauthor Kui Wu\\
       \affaddr{University of Victoria}\\
       \affaddr{Victoria, Canada}\\
       \email{wkui@uvic.ca}
% 6th. author
\alignauthor Anthony Sukow\\
       \affaddr{Terapeak}\\
       \affaddr{Victoria, Canada}\\
       \email{anthony@terapeak.com}
}

% There's nothing stopping you putting the seventh, eighth, etc.
% author on the opening page (as the 'third row') but we ask,
% for aesthetic reasons that you place these 'additional authors'
% in the \additional authors block, viz.

% Just remember to make sure that the TOTAL number of authors
% is the number that will appear on the first page PLUS the
% number that will appear in the \additionalauthors section.

\maketitle
\begin{abstract}

The majority of recommender systems are designed to recommend items 
(such as movies and products) to users. We focus on the problem of recommending buyers to sellers which comes with new challenges:
(1)~constraints on the number of recommendations buyers are part of before they become overwhelmed, 
(2)~constraints on the number of recommendations sellers receive within their budget, and
(3)~constraints on the set of buyers that sellers want to receive 
(e.g., no more than two people from the same household).
We propose the following critical problems of recommending buyers to sellers:  
Constrained Recommendation (C-REC) capturing the first two challenges, and 
Conflict-Aware Constrained Recommendation (CAC-REC) capturing all three challenges at the same time. 
We show that C-REC can be modeled using linear programming and can be efficiently solved using modern solvers. 
On the other hand, we show that CAC-REC is NP-hard. 
We propose two approximate algorithms to solve CAC-REC and show that they achieve close to optimal solutions via comprehensive experiments using real-world datasets.

%The majority of recommender systems in E-commerce sites are designed to make recommendations of items to users. 
%Better item recommendations can improve the satisfaction of buyers and boost sales of sellers on the platform. 
%However, recommending buyers to sellers is often neglected. 
%This problem comes with a new set of challenges, such as the limited number sellers a buyer can be reasonably recommended to before he becomes overwhelmed with sellers requests, and constraints that sellers can put on the number and type of buyers threcommended to them 

%; e.g., a high valuable buyer who purchased many items might not be referred to a top seller of abundant products which can satisfy the buyer's appetite. In order to maximize potential overall profit, it is desirable for the service provider to explicitly recommend buyers to a specific set of sellers. We call this problem as recommending buyers to sellers (RBS).

%This paper studies two critical problems of RBS, conflict oblivious recommendation (C-REC) and conflict aware recommendation (CAC-REC). The conflict between buyers represents the need of diversifying buyers for sellers. While C-REC can be effectively solved using linear programming, CAC-REC is NP-hard. Two approximate algorithms are proposed to solve CAC-REC and are shown to achieve close to optimal solutions via comprehensive experiments using a real-world dataset.  
\end{abstract}

\category{H.3}{Information Storage and Retrieval}{Information Search and Retrieval}[Information filtering]
\category{F.2.1}{Analysis of Algorithms and Problem Complexity}{Numerical Algorithms and Problems}[computations on matrices]

\terms{Algorithms}
\keywords{Buyer recommendation, optimization, approximation}

%% A category with the (minimum) three required fields
%\category{H.3}{Information Storage and Retrieval}{Information Search and Retrieval}[Information filtering]
%%A category including the fourth, optional field follows...
%\category{D.2.8}{Software Engineering}{Metrics}[complexity measures, performance measures]
%
%%\terms{Data mining}
%
%\keywords{Recommender systems, latent factor models, evaluation}

\section{Introduction}
We study the problem of recommending buyers to sellers (RBS) in an online e-commerce site, such as eBay. 
This problem comes with its own set of challenges, which are quite different from those of recommending items to users.
The first challenge is that we do not want to recommend the top buyers (in terms of their buying history) 
to all the sellers because those top buyers
will be inundated by thousands of advertisement messages from eager sellers trying to reach out to them. 
Therefore we need constraints in place that limit the number of sellers each buyer can be recommended to.
The second challenge is that sellers are normally under budget limitations, 
thus do not want more buyers recommended to them than what they can pay for. 
The third challenge is that we need to recommend 
top buyers to top sellers, 
middle-tier buyers to middle-tier sellers, and so on, 
as this maximizes the chance 
of making both the buyers and sellers happy; typically, top buyers buy a lot of merchandise, and it is the top 
sellers that have the best chance of satisfying their appetite.
In general we would like to avoid top buyers being recommended to less than stellar sellers
unless the latter have already been saturated with recommended buyers. 

We do not normally have these challenges when recommending items (e.g., movies) to users (RIU). 
For example there is absolutely no problem recommending the best movies to all the users of an online site;
on the contrary, it is widely applied. %(e.g. Netflix recommendation of ``Mad Men'' in March 2014)
Also, there is no problem of recommending too many movies to one person as long as they are sorted by the rating prediction. 
And, finally, movie recommendation is a very ``democratic'' process; 
top movies are recommended all the time to less than stellar users, i.e., those that do not watch and rate many movies (cold-start users). 
%In fact it is the recommendations for great movies that have a real chance of wakening up a cold-start user to interact more with an online site.  

From all the above, it is clear that RBS is very different from RIU, 
and therefore, we need a new formalization and new algorithms to address RBS. To the best of our knowledge, we are the first to directly approach this important problem. This is because classical advertising is done in ``broadcast'' mode via TV commercials and sponsored links in search results, which do not have the aforementioned challenges. Another way of advertising is via targeted campaigns. This is typically done (or initiated) by a single merchant company and involves identifying the most promising customers to target for its own products. Again, such a process is different from RBS, where there are many sellers competing for the buyers. 

In this paper, 
we introduce two central problems, Constrained Recommendation (C-REC) and Conflict-Aware Constrained Recommendation (CAC-REC), to address different RBS scenarios. We view RBS from the point of view of an online company, like eBay. Such a company typically associates a profit weight for each recommendation of a buyer to a seller. 
For a given seller, the more highly ranked the buyer, the more profitable the recommendation. 

In {\sc C-REC}, we assume that each buyer and seller has a specific ``capacity.'' That is, a buyer cannot be recommended to more sellers than his capacity, 
and a seller cannot be recommended more buyers than he has budget (capacity) for. Then the goal is to generate recommendations maximizing the total profitability (sum of recommendation weights), while respecting the capacity constraints of the buyers and sellers. 

\nop{In P2, we extend P1 by assuming that buyers and sellers are not disjoint sets. 
This is often the case in communities of sellers, where
they buy/sell from each other to replenish/free-up their repositories,
and the problem is to not overwhelm them by recommending too many connections. Again here, we want the top sellers to get in touch with other top sellers in order to maximize the chance they satisfy each other. }

In CAC-REC, we extend C-REC in a natural way. Namely, we assume the buyers could be ``in confict'' with other buyers. For example, a buyer might use the same IP address or browser cookie as another buyer, which would indicate that these two buyers share the same computer and probably live together in the same household. A seller might not want to be recommended buyers that are in conflict with each other, as for example in the case when the seller sells household items such as TVs, washers, etc. 

We show that C-REC can be modeled using linear programming and can be efficiently solved using modern solvers. 
On the other hand, we show that CAC-REC is NP-hard. 
We propose two approximate algorithms to solve CAC-REC and show that they achieve close to optimal solutions via comprehensive experiments using real-world datasets.

More specifically, our contributions are as follows:
\begin{enumerate}
\item
We initiate the study of natural problems arising in RBS systems: How do we recommend high-value buyers to sellers under various constraints?
\item
In the presence of capacity constraints (C-REC), we model the problem as integer linear programming and show that the problem can be optimally solved using LP solvers (i.e., the LP solutions are integral). 
\item
For the case of capacity and conflict constraints (CAC-REC), we prove that the problem is NP-hard and model it as semidefinite programming and integer linear programming. We present a greedy algorithm that is scalable and close to optimal.
\item
We provide an extensive experimental evaluation on real-world datasets validating the claims of scalability and optimality made above.
\end{enumerate}

%\newpage
\section{Related Work}
%ALEX editing

RBS is related to the task of constraint-based recommendation
(cf. \cite{DBLP:conf/ACMicec/FelfernigB08,DBLP:journals/umuai/ZankerJ09,DBLP:conf/cikm/KarimzadehganZ09,karimzadehgan2012integer,taylor2008optimal,DBLP:conf/recsys/XieLW10,DBLP:journals/tois/ParameswaranVG11}). 

The first two works consider constraints on item features (attributes), 
thus they do not study the same type of constraints as we do. 
The rest of works are more closely related to our first, C-REC, problem. 

Karimzadehgan et. al in \cite{karimzadehgan2008multi,DBLP:conf/cikm/KarimzadehganZ09,karimzadehgan2012integer} study the problem of optimizing the review assignments of scientific papers. Similarly to C-REC, they employ constraints on the quota of papers each reviewer is assigned. However, differently from C-REC, in their optimization setup, matching of reviewers with a paper is done based on matching of multiple aspects of expertise.
Taylor in \cite{taylor2008optimal} also considers the paper review assignment problem. The difference from C-REC is that it does not consider an ordering on the reviewers and papers. 

Xie, Lakshmanan, and Wood in \cite{DBLP:conf/recsys/XieLW10} study the problem of composite recommendations, where each recommendation comprises a set of items. 
They also consider constraints including the number of items that can be recommended to a user. Their objective, however, is to minimize the cost of a recommended set of items when each item has a price to be paid. 

Parameswaran, Venetis, and Garcia-Molina in \cite{DBLP:journals/tois/ParameswaranVG11} study the problem of course recommendations with course requirement constraints. Similarly as \cite{DBLP:conf/recsys/XieLW10}, the goal of \cite{DBLP:journals/tois/ParameswaranVG11} is to come up with set recommendations. However, the challenge they address is the modeling of complex academic requirements (e.g., take $2$ out of a set of $5$ math courses to meet the degree requirement). Such constraints are different from those that we consider.  

To the best of our knowledge, the second problem we consider, {\em CAC-REC}, is completely new.

\nop{
The two problems we study in this paper are related to the well studied problem of maximum weighted $b$-Matchings (\cite{S03}, Chapter 31). The reader is referred to the recent works of Ahn and Guha~\cite{AG13, AG14} for a detailed history of this problem. \nop{In this problem, we are given an edge-weighted graph $G$ with capacities on vertices. The goal is to compute a maximum weight subgraph of $G$ in which the number of edges incident on a vertex is at most its capacity.} While the first problem of our paper (C-REC)  The $b$-matching problem for the case when $G$ is bipartite exactly corresponds to the problem {\em C-REC} in our paper. The second problem we consider, {\em CAC-REC}, has not been studied before. 
}

\nop{
Our work aims to formalize RBS problem in a constraint-based optimization framework and provide efficient algorithms to solve this framework. 
Our first problem can be cast as classical transportation problem, which have been well studied in operation research field~\cite{munkres1957algorithms, gass2003linear}. Therefore, the optimal solution is guaranteed by using various linear programming solver~\cite{meindl2012analysis}. Conflict-aware constrained recommendation (CAC-REC) problem is proved to be NP-hard, however, semidefinite programming and a greedy algorithm are proposed to solve this problem. 
}

\section{Constrained Recommendation \\(C-REC)} %\linebreak
\label{P1}

We first study the recommendation problem without considering conflict between buyers. To describe the problem formally, we model the buyer-seller network as a undirected, bipartite graph (Figure~\ref{fig:p1_figure}) $G=\left<(B,S), E, W\right>$, where $B=\{b_1,b_2,\ldots, b_m\}$ denotes the list of buyers, $S=\{s_1,s_2,\ldots,s_n\}$ the list of sellers, $E \subseteq B \times S$ the set of edges, and $W: E \rightarrow \mathbb{R}^+$ the weights of the edges. 

\begin{figure}[H]
\centering
\includegraphics[scale=0.4]{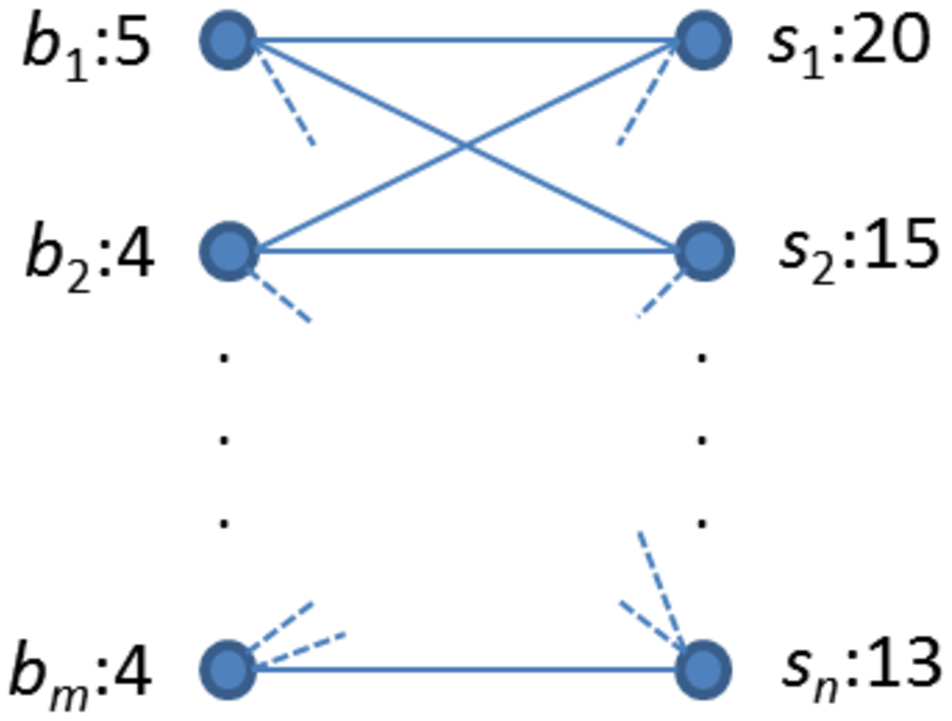}
\caption{C-REC. The numbers are degree constraints.}
\label{fig:p1_figure}
\end{figure}

For convenience, we slightly abuse the notation and use an $mn$-dimensional vector to denote $E= [e_{ij}]$, with $e_{ij}=1$ indicating that there is an edge between buyer $i$ and seller $j$ and $e_{ij}=0$ otherwise. Similarly, we use a vector to denote $W=[w_{ij}]$. If $e_{ij}=0$, then $w_{ij}=0$. When the subscripts are hard to read, as in the next section, we also use $W(i,j)$ to denote $w_{ij}$.     

The weight of the edge between a buyer $i \in B$ and a seller $j\in S$, $w_{ij}$, reflects the profitability value if the buyer is recommended to the seller. In practice, we may pre-process the bipartite graph based on various business models. For instance, we may order the buyers or sellers based on money spent and earned, recency of buys and sells, etc. In addition, we may constrain the edges from a buyer to a ranked range of sellers, so that the buyer is not recommended to sellers well outside of her ``tier".   

It is often desirable that only a certain number of buyers is recommended to a seller, since a seller may be overwhelmed otherwise or because a seller needs to pay for the recommendation. Similarly, it is not reasonable to recommend a buyer to a large number of sellers, as the buyer might be annoyed later by many unsolicited requests. To avoid the problem, a good recommendation system should allow us to constrain the number of recommendations associated with individual buyers and sellers. In other words, we need to put degree constraints $D: B\cup S \rightarrow \mathbb{N}$ in the bipartite graph. We represent $D$ with a $(m+n)$-dimensional column vector $D = [D(i)]^T$.  

Denote $X=[x_{ij}]^T$ as the $mn$-dimensional column vector of $0$-$1$ variables, with $x_{ij}=1$ indicating buyer $i$ is recommended to seller $j$ and $x_{ij}=0$ otherwise. The constrained recommendation ({\sc C-REC}) problem is to \textit{find the set of recommendations such that the total profit of the recommendation is maximized under the degree constraints}, i.e.,   
\begin{equation}\label{prob:CoRec}
\begin{aligned}
&  \max_{X}
& & WX  \\
& \textit{s.t.}
& & \mathbb{A} X (i)  \leq D(i), \forall i, 1 \leq i \leq m+n \\  %\preceq
&&&  x_{ij} \in \{0,1\}, \forall i,j, 1 \leq i \leq m, 1 \leq j \leq n,  % \mathbb{A}_2 X \leq \mathbb{I}_{mn}
\end{aligned}
\end{equation}
where matrix $\mathbb{A}$ is an $(m+n) \times mn$ matrix defined by 
\begin{equation}\label{eqt:A}
\begin{aligned}
\underbrace{
  \begin{bmatrix}
    [e_{11}, \ldots, e_{1n}] &    &           \\
       & [e_{21}, \ldots, e_{2n}] &           \\
	   & \ddots  &[e_{m1}, \ldots, e_{mn}]    \\
       [e_{11}, 0, \ldots, 0] & \ldots & [e_{m1}, 0, \ldots,0]          \\
       \ddots & \ddots& \ddots \\
       [0,\ldots, 0, e_{1n}] & \ldots & [0,\ldots, 0, e_{mn}] \\
  \end{bmatrix}.
	}_{(m+n)\times mn}
\end{aligned}
\end{equation}
The degree constraints are given by $\mathbb{A} X (i)  \leq D(i)$, where  $\mathbb{A} X (i)$ denotes the $i$-th element in (vector) $\mathbb{A} X$ and $D(i)$ the $i$-th element in $D$.

\nop{\noindent where $\mathbb{I}_{n\times n}$ is an $n \times n$ identity matrix, and the $0$-$1$ constraints on variable $x_{ij}$ are given by $\mathbb{A}_2 X \preceq \mathbb{I}_{mn}$, where $\mathbb{A}_2$ is an $mn\times mn$ identity matrix and $\mathbb{I}_{mn}$ is an $mn$-dimensional all $1$ vector. } 

From the above formulation, it is easy to see that the {\sc C-REC} problem could be reduced to the transportation problem in operations research~\cite{Amini,Rebman197411, Schrijver:1986:TLI:17634}, and as such we obtain an LP problem that can be solved efficiently by the modern solvers. Furthermore, it can be shown the LP solution is also integral.  

\nop{The order of these lists can be determined based on various 
business models, such as money spent and earned, recency of buys and sells, etc. 

We also assume that each buyer $b_i$, $i\in [1,m]$, has a capacity $c_i$;
similarly, each seller $s_j$, $j\in [1,n]$, 
has a capacity $d_i$. 

With each buyer $b_i$, $i\in [1,m]$, 
there is an associated rank range $[j_i, j_i']$ of sellers which says that $b_i$ can only be recommended to sellers of rank between $j_i$ and $j_i'$ (in the ordered list of sellers). 
The motivation for the rank ranges is to not let buyers be recommended to sellers well outside of their ``tier''.

Each recommendation has a profitability weight. 
We consider the weight to be a monotone function $w$ of the ranks of buyers and sellers; 
thus, $w(i,j+1)>w(i,j)$ and $w(i+1,j)>w(i,j)$. 
A simple function satisfying these conditions is $w(i,j)=i+j$. 
For simplicity we write $w_{ij}$ for $w(i,j)$.

We denote by $R$ the set of recommendations. Since each recommendation is a buyer-seller pair $(i,j)$, 
$R$ is a set of such pairs.
The goal is to maximize the total profit
$$
\sum_{(i,j)\in R} w_{ij}
$$
subject to not violating the capacity constraints of the buyers and sellers. 

In order to capture these constraints, we introduce 
variables $x_{ij}$, $i\in [1,m]$, $j\in [j_i,j_i']$, 
which take value 1 if buyer $b_i$ is recommended to seller $s_j$, and 0 otherwise.
The constraints can now be written as 
\begin{eqnarray*}
\sum_{j=j_i}^{j_i'} x_{ij} \leq c_i & & \forall i\in [1,m] \\
\sum_{i=1}^m x_{ij} \leq d_j & & \forall j\in [1,n]. 
\end{eqnarray*}

The two problems we study in this paper are related to the well studied problem of maximum weighted $b$-Matchings (\cite{S03}, Chapter 31). The reader is referred to the recent works of Ahn and Guha~\cite{AG13, AG14} for a detailed history of this problem. In this problem, we are given an edge-weighted graph $G$ with capacities on vertices. The goal is to compute a maximum weight subgraph of $G$ in which the number of edges incident on a vertex is at most its capacity. 
While the first problem of our paper (C-REC)  The $b$-matching problem for the case when $G$ is bipartite exactly corresponds to the problem {\em C-REC} in our paper. The second problem we consider, {\em CAC-REC}, has not been studied before.
}
\section{Conflict-Aware Constrained \\Recommendation (CAC-REC)}
\label{P3}
We now consider a natural generalization of problem C-REC. In some scenarios, sellers will prefer a {\em diverse} list of recommended buyers that avoids
certain redundancies. For example, a seller might prefer that their list does not include more than one recommended buyer from each household. Advertising to more than one potential buyer in a household for a given merchandise, in most cases, is unnecessary. We will represent presence of such dependencies or conflicts between two buyers using (unweighted) conflict edges (Figure~\ref{fig:p3_figure}).  We call this recommendation problem conflict-aware constrained recommendation ({\sc CAC-REC}). The goal is to \textit{compute a maximum weight subgraph satisfying the degree constraints as in C-REC with the additional requirement that the number of conflict edges within a list of buyers recommended to any particular seller is smaller than a threshold}. 

\begin{figure}[H]
\centering
\includegraphics[scale=0.4]{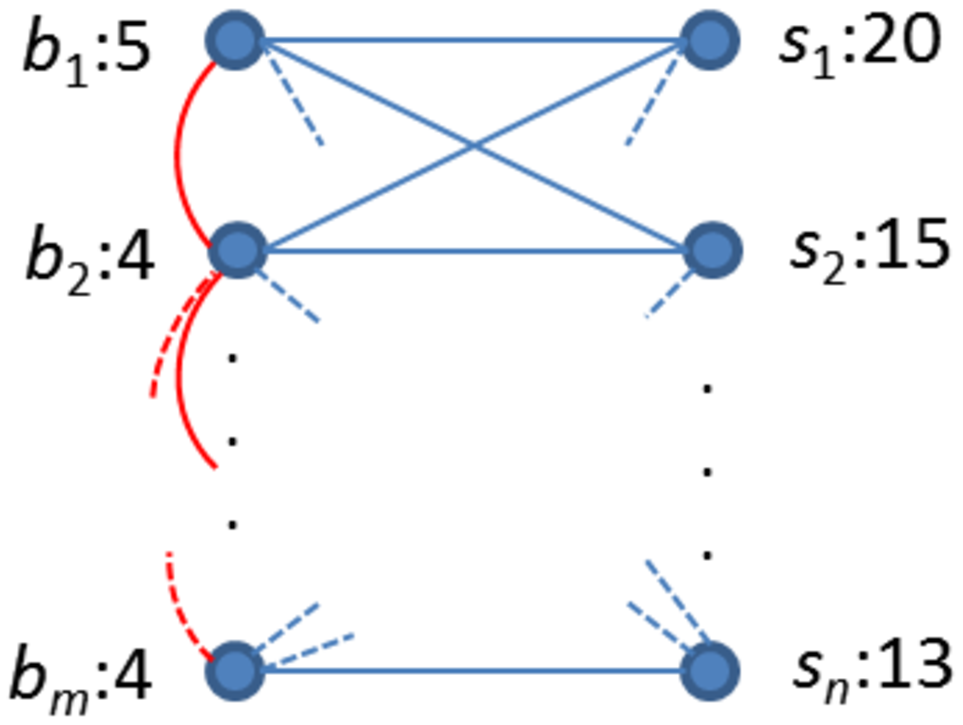}
\caption{CAC-REC. The numbers are degree constraints and the red edges represent conflicts.}
\label{fig:p3_figure}
\end{figure}

To describe {\sc CAC-REC} more precisely, the input to {\sc CAC-REC} consists of the following information:
\begin{enumerate}
\item
An undirected, weighted graph $G=\left<(B,S), E \cup C, W\right>$ with $E \subseteq B \times S$, $C \subseteq B \times B$ and weights $W: E \rightarrow \mathbb{R}^+$;
\item
Degree constraints $D: B\cup S \rightarrow \mathbb{N}$;
\item
A conflict threshold $t$.
\end{enumerate}

The goal in {\sc CAC-REC} is to compute a maximum weight subgraph $G'$ of $G$, $G'=((B,S), E' \cup C', W)$, that satisfies the following two constraints:
\begin{enumerate}
\item
For any $i$ in $B \cup S$, $d_{G'}(i) \leq D(i)$.
\item
For any $k$ in $S$, $|\{(i,j) | (i,k), (j,k) \in E', (i,j) \in C'\}| \leq t$.
\end{enumerate}

Here, $d_{G'}(i)$ denoted the degree of vertex $i$ in the subgraph $G'$.

We will next show that considering the conflict between buyers increases the complexity of the recommendation problem significantly. 

\subsection{NP-Hardness Result for CAC-REC}
\label{sec:np}

\nop{We investigate the complexity of P3. We prove the following result.}
In this section, we provide strong evidence that {\sc CAC-REC} is highly unlikely to have an efficient (i.e, polynomial time) algorithm by showing that it is NP-hard.

\begin{theorem}\label{thm:main}
{\sc CAC-REC} is NP-hard.
\end{theorem}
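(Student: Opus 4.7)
The plan is to reduce from Maximum Independent Set, which is classically NP-hard. Given a graph $H = (V, E_H)$ and a target size $k$, I build a CAC-REC instance as follows: create a buyer $b_v$ for each $v \in V$ and a single seller $s$; for every $v$ add a buyer-seller edge $(b_v, s)$ of weight $1$; take the conflict set to be $C = \{(b_u, b_v) : (u,v) \in E_H\}$; set $D(b_v) = 1$ for every buyer, $D(s) = k$, and set the conflict threshold to $t = 0$.

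The correctness argument inspects any feasible solution. Since there is only one seller, the chosen buyer-seller edge set $E'$ is uniquely described by the subset $T \subseteq B$ of buyers recommended to $s$; the buyer capacities $D(b_v) = 1$ are vacuous because only one seller is available, while $D(s) = k$ forces $|T| \le k$. The conflict constraint at $s$ with $t = 0$ then demands that no two buyers in $T$ be joined by a conflict edge, i.e., $T$ is an independent set of $H$. Because every buyer-seller edge carries unit weight, the optimum value of the CAC-REC instance equals $\min(k, \alpha(H))$, where $\alpha(H)$ denotes the independence number of $H$. Consequently, $H$ has an independent set of size $k$ if and only if the CAC-REC instance has optimum exactly $k$, so an efficient algorithm for CAC-REC would decide Independent Set in polynomial time.

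The main point that needs care, and the place where the argument could go wrong, is the interpretation of the conflict subset $C'$ in the formal statement. To be consistent with the informal description that ``the number of conflict edges within a list of buyers recommended to any particular seller is smaller than a threshold'', one must insist that $C'$ contains every conflict edge both of whose endpoints are recommended to some common seller; otherwise the adversarial choice $C' = \emptyset$ would trivialize constraint~(2) and collapse CAC-REC to C-REC. With that standard reading in place, the reduction is polynomial in $|V| + |E_H|$, the bijection between independent sets of $H$ (of size at most $k$) and feasible solutions of the constructed CAC-REC instance is exact, and Theorem~\ref{thm:main} follows. I expect no further obstacle; in particular, no gadgets beyond the single-seller construction are needed, which incidentally shows that the hardness already bites in the special case $|S| = 1$.
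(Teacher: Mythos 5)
Your reduction is correct, but it takes a genuinely different route from the paper. The paper reduces from {\sc Revenue Maximization in Interval Scheduling}: jobs become buyers, machines become sellers, overlapping job intervals become conflict edges, each job gets degree constraint $1$, each machine gets degree constraint $n$, and $t=0$; the correctness argument is a direct correspondence between feasible schedules and feasible subgraphs. You instead reduce from {\sc Maximum Independent Set} using a single seller $s$, unit edge weights, $D(s)=k$, and $t=0$, so that feasible solutions are exactly independent sets of size at most $k$ and the optimum is $\min(k,\alpha(H))$. Both reductions are valid; yours is more elementary (it starts from a textbook NP-hard problem rather than from RMIS, whose hardness the paper asserts without citation) and it yields the sharper structural conclusion that CAC-REC is already NP-hard when $|S|=1$ and all weights are equal, i.e., the hardness lives entirely in the conflict constraint at one seller. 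The paper's reduction, by contrast, exhibits hardness in a regime with many sellers and buyer capacities equal to $1$, which connects the problem to scheduling. Your side remark about the formal role of $C'$ is also well taken: as literally written, constraint (2) could be vacuously satisfied by choosing $C'=\emptyset$, and both your proof and the paper's implicitly adopt the intended reading that every conflict edge whose endpoints are co-recommended to a seller is counted. No gap remains under that reading.
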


\begin{proof}
We give a polynomial-time reduction from the NP-hard problem {\sc Revenue Maximization in Interval Scheduling}.\\

\noindent{\sc Revenue Maximization in Interval Scheduling (RMIS)}

\medskip
{\bf\em Instance:} A set $M = \{m_1,m_2,\ldots,m_t\}$ of $t$ machines and a set $J=\{j_1,j_2,\ldots,j_n\}$ of $n$ jobs. For each job $j$ in $J$, we are
given three parameters: (1) $S(j)$, the set of machines on which $j$ can be processed (2) $R(j,-)$, the set of possible revenues obtained when job $j$ is processed on different machines (3) $I(j)$, the time interval during which job $j$ must be processed.\\

{\bf\em Goal:} Find a feasible schedule of a subset of jobs on the machines that maximizes the total revenue of the jobs scheduled.\\

\noindent We now describe a reduction from {\sc RMIS} to {\sc CAC-REC}. Given an instance I of the revenue maximization problem, construct a graph $G(I)$, which is an instance of {\sc CAC-REC}. Define $G(I) =\left<(J,M), E \cup C, W\right>$ with $E \subseteq J \times M$, $C \subseteq J \times J$ and weights $W: E \rightarrow \mathbb{R}^+$ as follows. 

\begin{itemize}
\item $E =\{(j_k,m_l) | m_l \in S(j_k)\}$. 
\item $C = \{(j_k,j_l) | I(j_k) \cap I(j_l) \neq \emptyset \}$.
\item $W(j_k,m_l) = R(j_k, m_l)$.
\item $D(j_k) = 1$ for all $k$ and $D(m_l) = n$ for all $l$.
\item $t$ = 0.
\end{itemize}

We now explain the reduction above. There is an edge between job $j_k$ and machine $m_l$ if $m_l$ belongs to $S(j_k)$, the set of machines in which job $j_k$ can be processed. There is a conflict edge between job $j_k$ and job $j_l$ if their time interval for processing overlap. The weights on an edge $(j_k,m_l)$ represent the revenue obtained if job $j_k$ is processed on machine $m_l$. Since each job can be assigned to at most one machine, their degree constraints are set to 1. There is no constraint on the number of jobs assigned to any machine and hence their degree constraint is set to $n$. Finally, there must be no conflict between two jobs assigned to same machine. Therefore, $t$ is set to 0.

It can be easily seen that an optimal solution for $G(I)$, an instance of {\sc CAC-REC} yields an optimal solution for I, an instance of RMIS. In other words, a maximum weight subgraph of $G(I)$ satisfying the degree constraints and conflict constraints as described above exactly corresponds to a revenue maximizing schedule in $I$. Furthermore, we observe that the reduction above is a polynomial-time reduction. Therefore we conclude that {\sc CAC-REC} is NP-hard.
\end{proof}

\subsection{A SDP Algorithm for CAC-REC} \label{sec:sdp_algorithm}
In the previous section, we showed that {\sc CAC-REC} is NP-hard. In this section and next, we will design two efficient algorithms for {\sc CAC-REC} that provide high-quality solutions that are close to optimal. 

Our first algorithm for {\sc CAC-REC} is based on a semidefinite programming based approach. To understand the motivation for this approach, recall that we described a LP formulation for {\sc C-REC} in Section~\ref{P1}. Using the terminology from Section~\ref{P1} and~\ref{sec:np}, the conflict constraint can be described as follows:

\begin{equation}\label{eqt:t}
\sum_{(j_k,j_l) \in C} x_{ki}x_{li} \leq t \mbox{  $\forall$ $i \in S$} %{\color{red}S}$}
\end{equation}
%$$.$$

\nop{???}
That is, the conflict constraint is quadratic. We use a single $t$ for illustration purpose. In practice, different sellers can have different values of $t$. We will now show how to formulate {\sc CAC-REC} as a semidefinite program. Define a $mn \times mn$ symmetric matrix $\mathbb{Y} = XX^T$ where $X$ is as in Section~\ref{P1}.  The CAC-REC problem can be described as

\begin{equation}\label{prob:CARec}
\begin{aligned}
&  \max
& Trace(\mathbb{W}\mathbb{Y}) \\
& \textit{s.t.}
& Trace(\mathbb{D}^b_i\mathbb{Y}) \leq D(i), & \forall~i \in B  \\
&& Trace(\mathbb{D}^s_i\mathbb{Y}) \leq D(i), & \forall~i \in S  \\
&& Trace(\mathbb{C}_i\mathbb{Y}) \leq t, & \forall~i \in S\\ 
&& \mathbb{Y} = XX^T \succeq 0 & %\mathbb{Y} = XX^T
\end{aligned}
\end{equation}
where $\mathbb{W}, \mathbb{D}^b_i, \mathbb{D}^s_i$ and $\mathbb{C}_i$ are suitably defined $mn \times mn$ symmetric matrices described below. 

\begin{enumerate}
\item 
$\mathbb{W}$ is a diagonal matrix with diagonal weights $w_{ij}$. 
\item
$\mathbb{D}^b_i$ is diagonal matrix with a 1 for row indexed by $(i,j)$ if $(i,j) \in E$ and 0 otherwise.
\item
$\mathbb{D}^s_i$ is diagonal matrix with a 1 for row indexed by $(k,i)$ if $(k,i) \in E$ and 0 otherwise.
\item
Finally, $\mathbb{C}_i$ is a matrix with entries 1/2 and 0. An entry indexed by row $(j,i)$ and column $(k,i)$ is equal to 1/2 if $(j,i), (k,i) \in E$ and $(j,k) \in C$. It is 0 otherwise. 
\end{enumerate}

\nop{Let us start by defining an indicator variable $x_{ij}$ for any edge $(i,j) \in E$.

\begin{eqnarray*}
x_{ij} & = & 1 \mbox{ if $(i,j) \in E'$} \\
       & = & 0 \mbox{ otherwise}
\end{eqnarray*}

Then the objective function to be maximized in P3 is given by $$\sum_{(i,j) \in E}w_{ij}x_{ij}$$ 

Let us now describe the constraints. The degree constraints are described by 

$$ \forall v \in B \cup S, \sum_{(v,v') \in E}x_{v,v'} \leq D(v).$$

The conflict constraints are described by

$$ \forall v \in S, \sum_{(v,v') \in C}x_{v,s}x_{v',s} \leq t.$$

Let $\hat{x}=[x_{ij}]$ be a column vector and $X=\hat{x}\hat{x}^T$. Then, by definition, $X$ is positive semidefinite. Using $X$ and the fact that 
$x_{ij}^2 = x_{ij}$, we can rewrite the formulation as 

\medskip
\noindent
Maximize $WX $\\
subject to \\
$D_v \cdot X \leq D(v)$ \mbox{ $\forall v \in B \cup S$} \\
$C_v \cdot X \leq t$ \mbox{ $\forall v \in S$}\\ 
$X \succeq 0$

where $W, D_v$ and $C_v$ are suitably defined symmetric matrices. 
}
\noindent
Our SDP based algorithm for CAC-REC is as follows:
\begin{enumerate}
\item
Solve the semidefinite program relaxation to obtain optimal solution $\mathbb{Y}$.
\item
Using the Cholesky decomposition~\cite{Press} of $\mathbb{Y}$, obtain the vectors $x_{ij}$ corresponding to $\mathbb{Y}$.
\item
Use a two-step rounding procedure, random projection followed by threshold rounding, to obtain $0,1$ values for $x_{ij}$.
\end{enumerate}

In step (1) of our SDP algorithm, the SDP described above is solved using a generic SDP solver. The output of step (1) is a semidefinite matrix $\mathbb{Y}$. In step (2) of our algorithm, we use a well-known fact that any semidefinite matrix $\mathbb{Y}$ can be written as $\mathbb{Y}=\mathbb{V}\mathbb{V}^T$ where $\mathbb{V}$ is a $mn \times mn$ lower triangular matrix. This decompostion is known as Cholesky decomposition of $\mathbb{Y}$~\cite{Press}. The columns of $\mathbb{V}$ give us a vector solution for the variables $x_{ij}$. Thus, the output of step (2) of our algorithm are $mn$ vectors, one for each $x_{ij}$. These vectors correspond to the optimal solution of the SDP. In the last step of our algorithm, we convert the vectors $x_{ij}$ to integral $\{0,1\}$ values using a two-step rounding procedure. In the first step, we convert $x_{ij}$'s to fractional values by a random projection. That is, we pick a random vector $x$ of dimension $mn$ by picking each of its coordinates from the normal distribution $N(0,1)$ and define each $x_{i,j}$ as the length of its projection on to $x$. Finally, we sort $x_{ij}$ and round each non-zero value to 1 provided doing so does not violate the degree constraints or the conflict constraints. Otherwise, we set it to 0.

\nop{We study the experimental performance of this algorithm on real world data sets in the experiments section.}

\subsection{ILP Formulation of CAC-REC}
Solving the SDP formulation requires large enough physical memory to store the $mn \times mn$ matrix $\mathbb{Y}$. In practice (e.g., the eBay purchase graph of a certain category), large values for $m$ (the number of buyers) and $n$ (the number of sellers) inevitably restrict the applicability of the SDP approach. This limitation, however, can be alleviated if we could model CAC-REC as an integer linear programming (ILP) problem. 

In order to achieve this goal, we introduce a new $0$-$1$ variable $z_{i, (j,k)}$ to formulate Inequality~\ref{eqt:t} as a linear constraint. For each seller $i$, $z_{i, (k,l)}$ equals $1$ if and only if there is a conflict edge between two buyers $k$ and $l$, and both edge $e_{ki}$ and $e_{li}$ are recommended in the graph. Using the terminology from Section~\ref{P1} and~\ref{sec:np}, this constraint can be described as follows:

%\begin{equation}
\begin{align}
(1 - x_{ki}) + (1 - x_{li}) + z_{i, (k,l)} &\geq 1 \mbox{  $\forall$ $i \in S, \quad \forall k,l \in B$} \\
x_{ki} + x_{li} - 2 z_{i, (k,l)} &\geq 0 \mbox{  $\forall$ $i \in S, \quad \forall k,l \in B$} \\
\sum_{(k,l) \in C_i} z_{i, (k,l)} &\leq t \mbox{  $\forall$ $i \in S$} 
\end{align}

In Constraint (6), $C_i$ is defined as follows:
$C_i = \{ (k,l) \in C | (k,i) \in E \wedge (l,i) \in E \}$.
%\end{equation}

%as Problem~\ref{prob:CoRec}
The linear conflict constraints can be easily incorporated into C-REC (Problem~\ref{prob:CoRec}). Let $c$ denote the total number of conflict constraints with respect to all sellers in the graph. Then we can obtain a linear programming formulation, where $X = [x_{ij}]^T$ is a ($mn+c$)-dimensional column vector of $0$-$1$ variables. In addition, matrix $\mathbb{A}$ and vectors $W$ and $D$ can be changed accordingly. 

By eliminated the need to store large dimensional matrices, now we can use a ILP solver to tackle CAC-REC problems of larger sizes. As opposed to C-REC, obtaining an integer solution in CAC-REC is NP-hard. In order to further improve efficiency, we use a rounding procedure after solving the linear program relaxation. Our LP based algorithm for CAC-REC is as follows:
\begin{enumerate}
\item
Solve the linear program relaxation to obtain optimal solution $X$.
\item
Sort the first $mn$ elements of $X$ from largest to smallest. We round each non-zero value to 1 provided doing so does not violate the degree constraints or the conflict constraints. Otherwise, we set it to 0.
\end{enumerate}

\subsection{A Greedy Algorithm for CAC-REC} 

In this section, we describe and study the performance of a simple greedy algorithm for this problem. This algorithm has the advantage that it is highly scalable and provides good quality solutions in practice. 

The greedy algorithm, denoted as GREEDY, for CAC-REC is as follows:
\begin{enumerate}
\item
Sort all the edges in $E$ by weight from largest to smallest.
\item
To construct the maximum weight subgraph $G'$, consider every edge in the sorted list. Add this edge to $G'$ if doing so
does not violate any degree constraint or conflict constraint.
\item
Continue until we reach the end of the sorted list.
\end{enumerate}

We will now prove a theoretical guarantee on the performance of GREEDY.

\begin{theorem}
Let $d= \max_{v \in B} |\{(v,v') | (v,v') \in C\}|$. Algorithm GREEDY is a $(2+d)$- approximation algorithm.
\end{theorem}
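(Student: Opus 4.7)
The plan is a weight-charging argument. Let $G'$ denote the greedy output and $\mathrm{OPT}$ any maximum-weight feasible subgraph. For every edge $e=(b,s)\in\mathrm{OPT}\setminus G'$, greedy must have rejected $e$ for at least one of three reasons at the moment it was processed: (B) buyer $b$ had already reached degree $D(b)$ in the partial $G'$; (S) seller $s$ had already reached degree $D(s)$; or (C) adding $e$ would have pushed the conflict count at $s$ above $t$, which forces some $(b',s)\in G'$ with $(b,b')\in C$ to already be present (otherwise admitting $e$ would introduce no new conflict pair and the count would stay $\leq t$). I label each $e\in\mathrm{OPT}\setminus G'$ with exactly one of B, S, or C (breaking ties by any fixed rule) and bound the three resulting contributions separately.

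For the degree-saturation charges, fix a vertex $v\in B\cup S$ that is saturated in $G'$ and write $O_v$ (resp.\ $G_v$) for the OPT (resp.\ greedy) edges incident to $v$, so that $|G_v|=D(v)$ and $|O_v|\leq D(v)$. Every B- or S-blamed edge witnessed at $v$ lies in $O_v\setminus G_v$, and since greedy visits edges in non-increasing weight order and $v$ was already saturated when such an $e$ was considered, $w(e)\leq\min_{f\in G_v}w(f)$. The counting inequality $|O_v\setminus G_v|=|O_v|-|O_v\cap G_v|\leq D(v)-|O_v\cap G_v|=|G_v\setminus O_v|$ together with this weight bound gives that the total blamed weight at $v$ is at most $|G_v\setminus O_v|\cdot\min_{f\in G_v\setminus O_v}w(f)\leq\sum_{f\in G_v\setminus O_v}w(f)$. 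Summing over all $v$, each $f\in G'\setminus\mathrm{OPT}$ is counted twice (once per endpoint) while each $f\in G'\cap\mathrm{OPT}$ is counted zero times, so the total B/S contribution is at most $2\,w(G'\setminus\mathrm{OPT})=2w(G')-2w(G'\cap\mathrm{OPT})$.

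For the conflict charges, assign each C-blamed $e=(b,s)$ to any witness $(b',s)\in G'$ with $(b,b')\in C$ that was present when greedy processed $e$. A fixed greedy edge $f=(b^*,s^*)$ can be such a witness only for OPT edges of the form $(b,s^*)$ with $(b,b^*)\in C$; by definition of $d$ there are at most $d$ such buyers $b$, and each contributes weight at most $w(f)$, so the total C contribution is at most $d\cdot w(G')$.

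Combining the three pieces, $w(\mathrm{OPT})=w(\mathrm{OPT}\cap G')+w(\text{B/S-blamed})+w(\text{C-blamed})\leq w(\mathrm{OPT}\cap G')+(2w(G')-2w(\mathrm{OPT}\cap G'))+d\cdot w(G')=(2+d)w(G')-w(\mathrm{OPT}\cap G')\leq(2+d)w(G')$, which proves the claim. The main obstacle is the degree step: a naive charging that spreads each B- or S-blamed OPT edge over all $D(v)$ greedy edges at its saturated endpoint yields only a $(3+d)$-bound, and shaving the extra factor requires the careful $|G_v\setminus O_v|$ accounting that absorbs the overlap with $G'\cap\mathrm{OPT}$.
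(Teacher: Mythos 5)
Your proof is correct, but it takes a genuinely different route from the paper. The paper does not argue about the greedy execution directly: it invokes Mestre's theorem on $k$-extendible systems, taking $U=E$ and $\mathcal{F}$ to be the feasible subgraphs, and verifies the exchange property --- adding one edge $(u,v)$ to a feasible solution can be repaired by deleting at most two edges (one per saturated endpoint) plus at most $d$ edges to restore the conflict count at $v$ --- so that feasible solutions form a $(2+d)$-extendible system and greedy is a $(2+d)$-approximation by a black-box appeal to \cite{M06}. You instead give a self-contained charging argument: each edge of $\mathrm{OPT}\setminus G'$ is blamed on the constraint that blocked it, degree-blamed weight at a saturated vertex $v$ is absorbed by $G_v\setminus O_v$ via the counting inequality $|O_v\setminus G_v|\leq|G_v\setminus O_v|$ and the sorted-order weight bound, and conflict-blamed weight is charged to the earlier-accepted witness edge, which can serve at most $d$ distinct blamed edges. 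Both arguments are sound and yield the same constant; the paper's is shorter and modular (the exchange-property check is two observations once Mestre's machinery is available), while yours is elementary, makes the provenance of the ``$2$'' and the ``$d$'' explicit, and even squeezes out a marginally stronger conclusion, $w(\mathrm{OPT})\leq(2+d)\,w(G')-w(\mathrm{OPT}\cap G')$. Your refined accounting at saturated vertices (charging only to $G_v\setminus O_v$ rather than to all of $G_v$) is exactly what is needed to avoid degrading to $3+d$, and is handled correctly, including the degenerate case $G_v\setminus O_v=\emptyset$, where the inequality forces the blamed set at $v$ to be empty.
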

\begin{proof}
We use the concept of a {\em $k$-extendible system} to provide performance guarantees of a greedy algorithm. Mestre introduced the notion of a $k$-extendible system in his study of the performance of the greedy technique as an approximation algorithm~\cite{M06}. 

\begin{definition}[$k$-Extendible System {\cite{M06}}]
Let $U$ be a finite set and ${\cal F}$, ${\cal F} \subseteq 2^U$, be a collection of subsets of $U$. Set system $(U,\cal F)$ is called a {\em $k$-extendible system} if it satisfies the following properties:
\begin{enumerate}
\item
{\em Downward-closure}: If $A \subseteq B$ and $B \in \cal F$, then $A \in \cal F$.
\item
{\em Exchange}: Let $A,B \in \cal F$ with $A \subseteq B$, and let $x \in U-B$ be such that $A \cup \{x\} \in \cal F$. Then there exists $Y \subseteq B-A$, $|Y| \leq k$, such that $(B-Y) \cup \{x\} \in \cal F$. In other words, let us start with any choice of two sets $A$ and $B$ such that $B$ is an extension of $A$. Suppose that there is an element $x$ such that 
the set $A$ with $x$ added to it also belongs to $\cal F$. Then we will be able to find a subset $Y$ inside $B$ of size at most $k$ such that if we remove the elements of $Y$ from $B$ and add the element $x$ to the resulting set, it will also belong to the collection $\cal F$. 
\end{enumerate}
\end{definition}

Informally, Mestre showed that if the set of all feasible solutions forms a $k$-extendible system, algorithm GREEDY gives a $k$-approximation algorithm. That is, on any instance, the solution output by GREEDY differs from the optimal solution by a multiplicative factor of at most $k$. We now state his result more formally. 
%is at least the value of the optimal solution

%is within a multiplicative factor of $k$ from the optimal.

\begin{figure*}[th]
\centering
\subfigure[Degree Constraint Ratio: $10\%$]{
\includegraphics[scale=0.15]{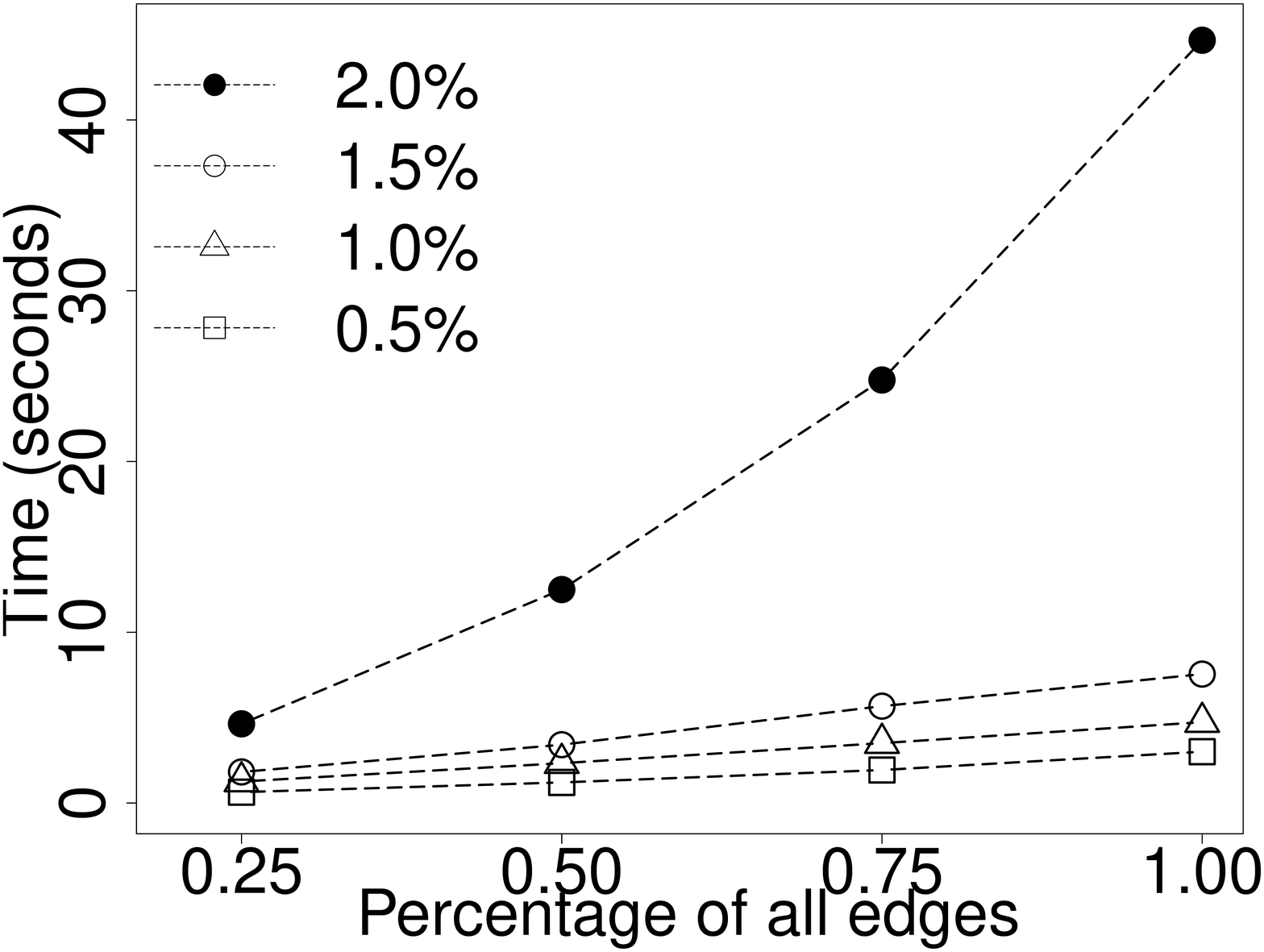}
}
\subfigure[Degree Constraint Ratio: $20\%$]{
\includegraphics[scale=0.15]{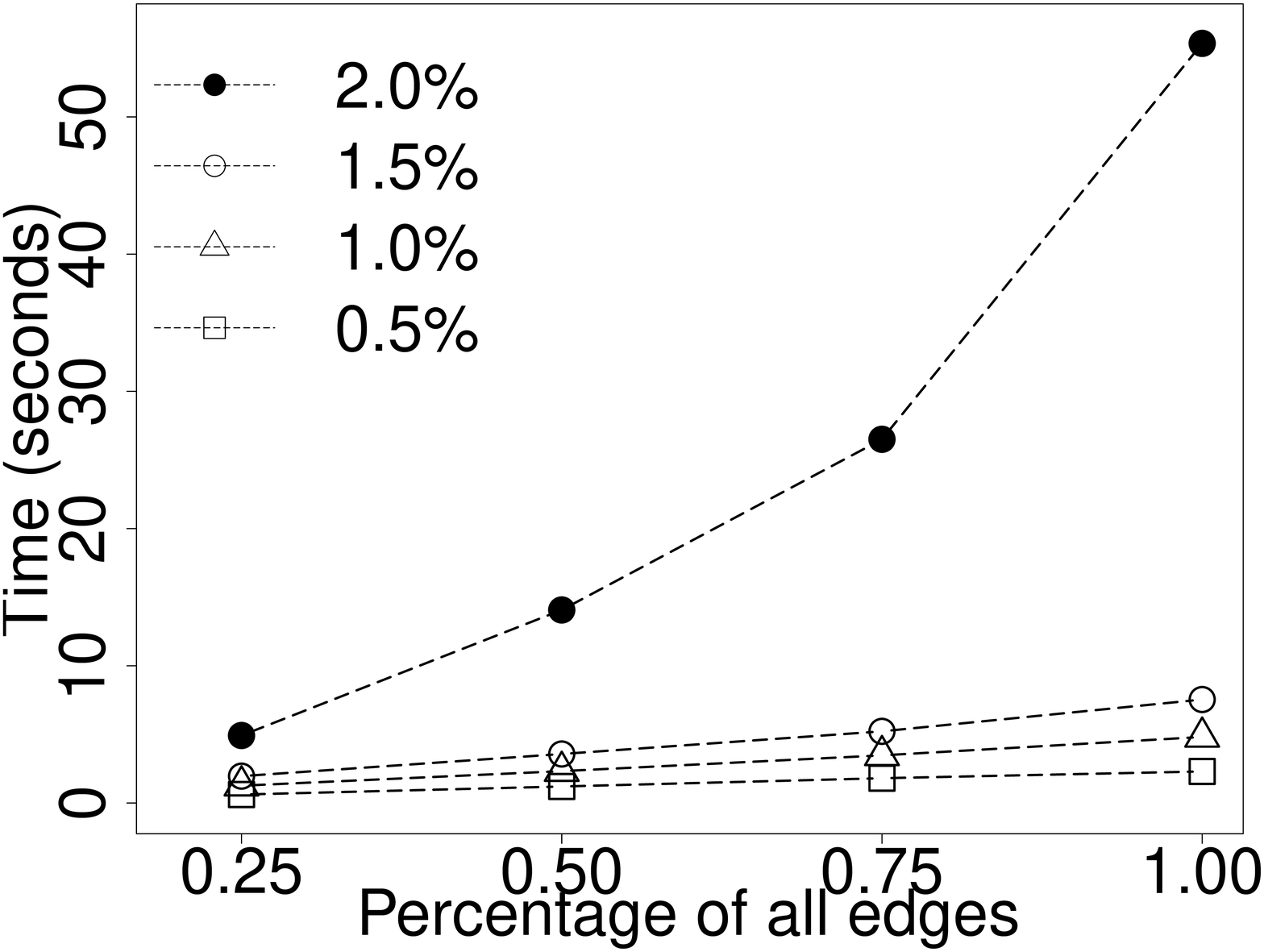}
}
\subfigure[Degree Constraint Ratio: $30\%$]{
\includegraphics[scale=0.15]{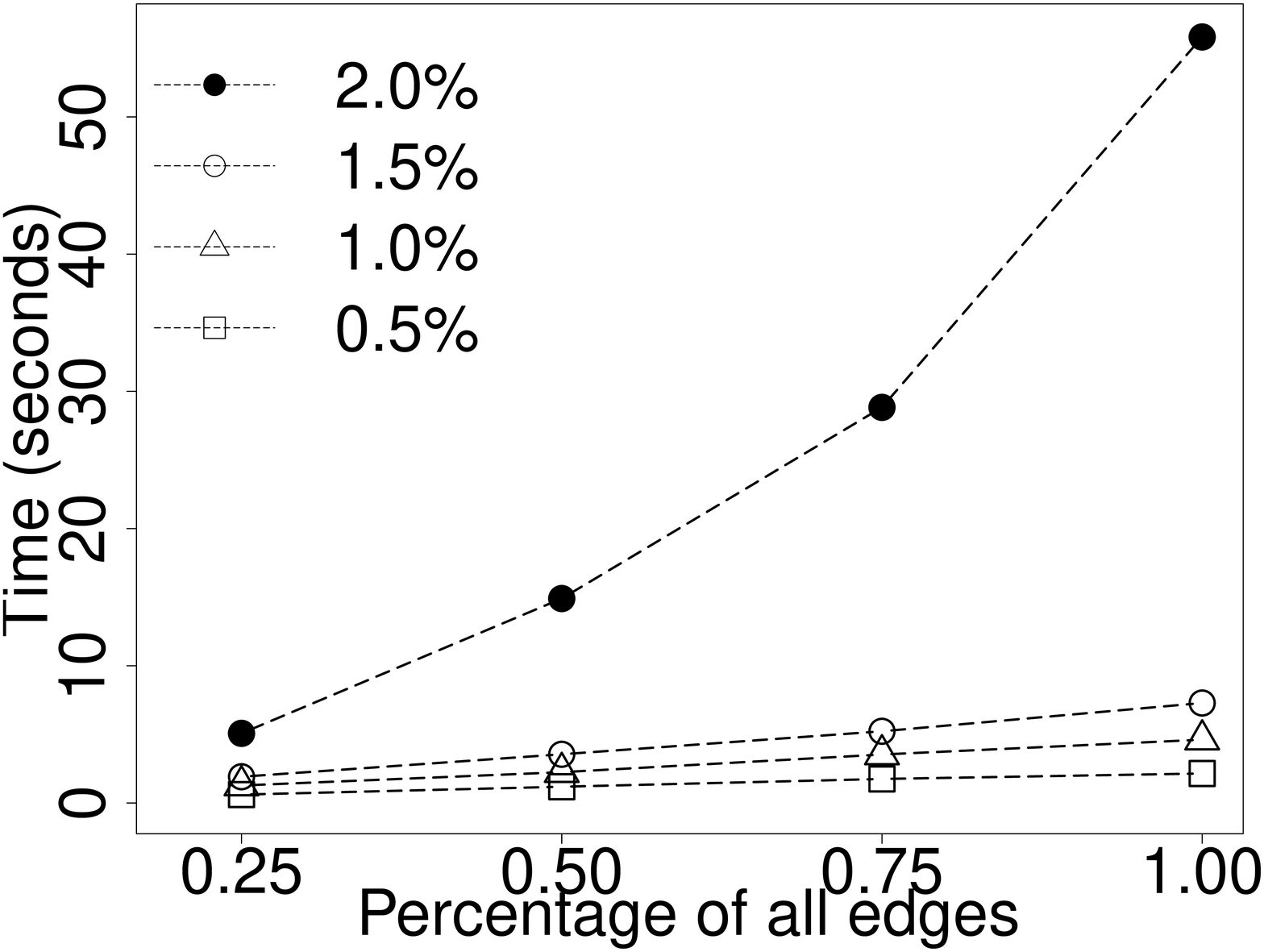}
}
\subfigure[Degree Constraint Ratio: $40\%$]{
\includegraphics[scale=0.15]{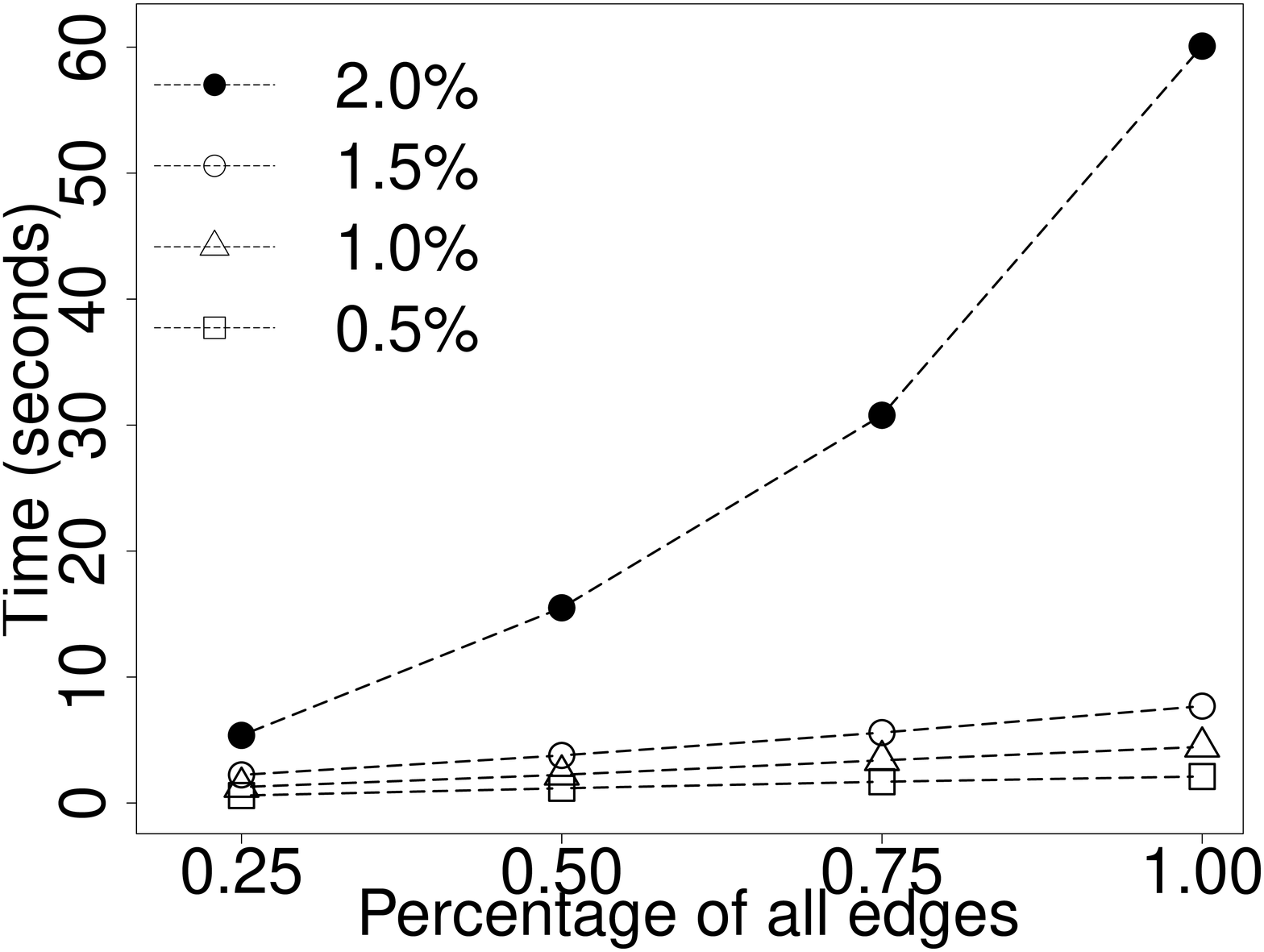}
}
\subfigure[Degree Constraint Ratio: $50\%$]{
\includegraphics[scale=0.15]{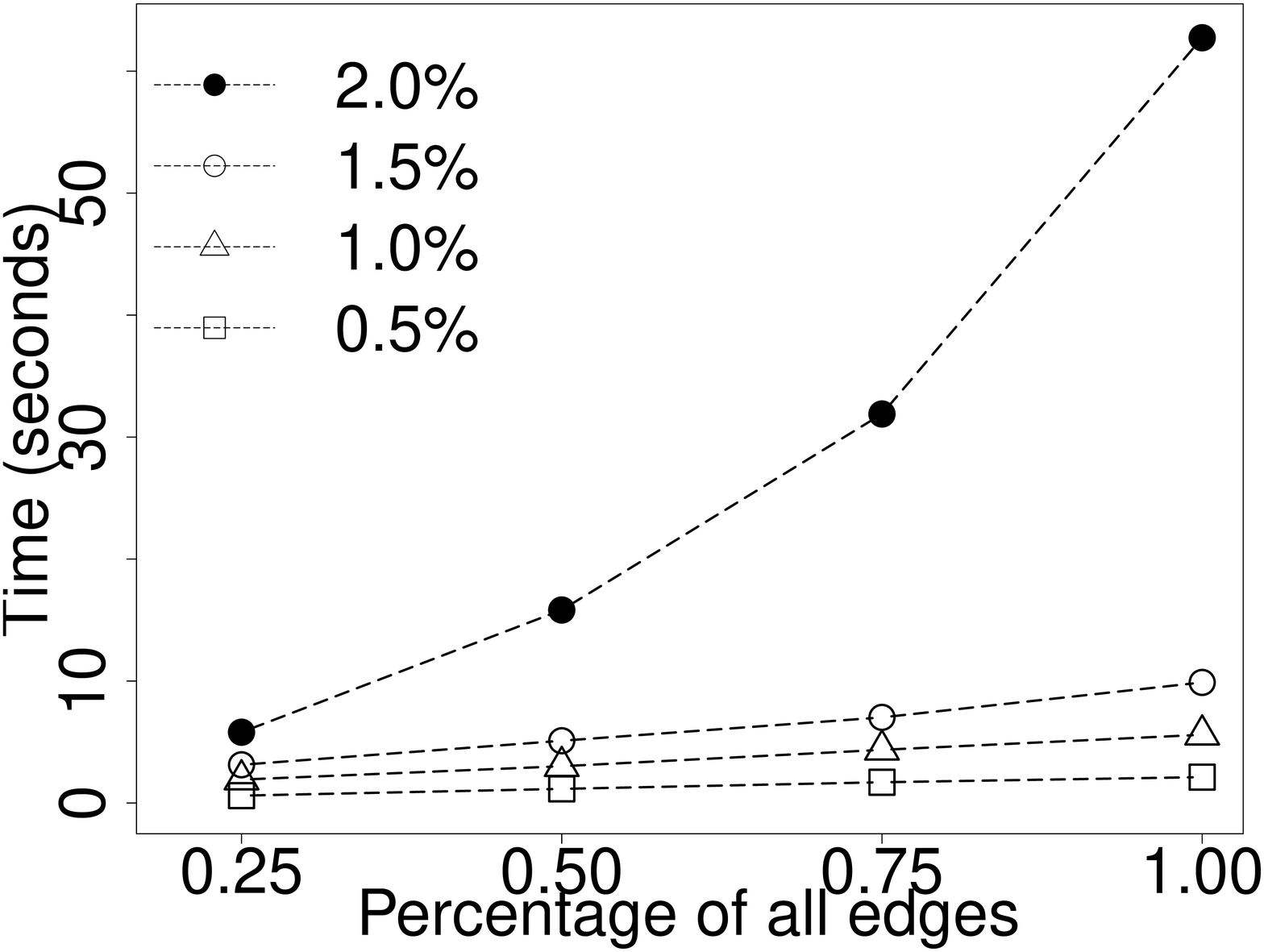}
}
\caption{C-REC running time of different degree constraint ratios. Similar trends are seen for running time in all scenarios, i.e., C-REC with a higher density requires a longer time to be solved and the running time increases when the size of the graph is scaled up.}
\label{fig:p1_time}
\end{figure*}

\begin{theorem}[Mestre~\cite{M06}]\label{thm:M06}
Let $(U, \cal F$) be a $k$-extendible system for some $k$. Let $W: U \rightarrow I\!\!R^{+}$ be a positive weight function on $U$. Then, the greedy algorithm gives a \nop{$\frac{1}{k}$-}$k$-approximation algorithm for the optimization problem that asks to determine $\max\limits_{F \in \cal F} W(F)$ where $W(F) = \sum\limits_{s \in F} W(s)$ for any $F \in \cal F$.
\end{theorem}

To apply this result to our problem, we will check that the set of all feasible solutions to CAC-REC forms a $(2+d)$-extendible system. For the CAC-REC problem,  $U = E$ and ${\cal F}$ be the set of all subgraphs of $G$ satisfying the degree and conflict constraints. Then it is easy to see that $(U,\cal F)$ is downward closed. That is, removing an edge from a feasible solution $H$ will always result in be a feasible solution as this will not cause any violation of constraints.

For the exchange property, consider the case when a new edge $e=(u,v)$, $u \in B$ and $v \in S$, is added to a feasible solution $H$.  We make two observations: (1) Adding $e$ could result in violation of degree constraint at $u$ and $v$. However, this can be rectified by removing two other edges, one incident on $u$ and other incident on $v$; (2) Adding $e$ could result in a violation of the conflict constraint at $v$. Rectifying this could require removing at most $d$ edges where  $d= \max_{v \in B} |\{(v,v') | (v,v') \in C\}|$. Therefore,
we obtain a $(2+d)$-extendible system.
\end{proof}
We remark that this analysis is worst-case. In practice, GREEDY shows far superior performance, as demonstrated in our later test with real-world dataset.

\begin{figure*}[htb]
\centering
\subfigure[Conflict Pair Ratio: $5\%$]{
\includegraphics[scale=0.17]{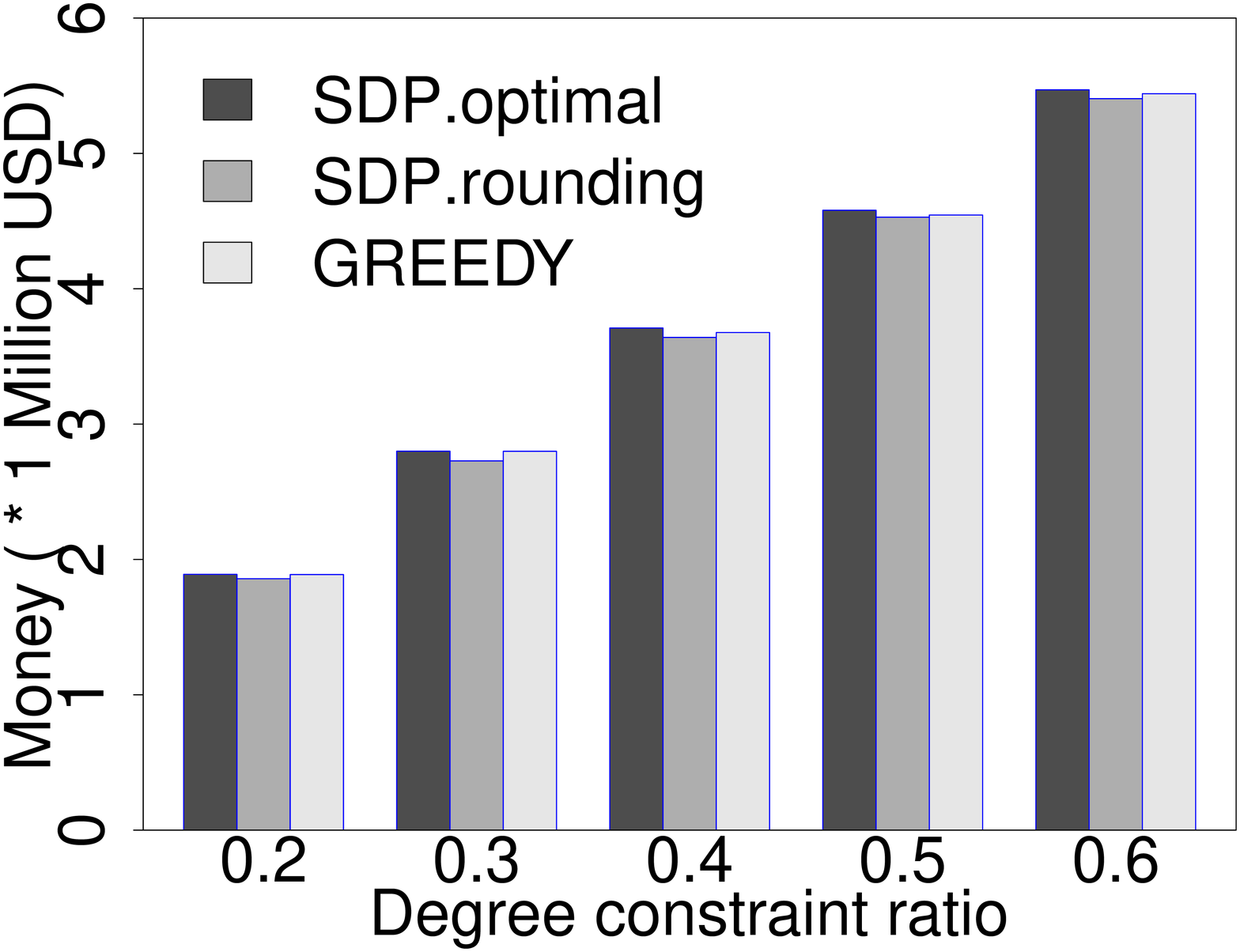}
}
\subfigure[Conflict Pair Ratio: $10\%$]{
\includegraphics[scale=0.17]{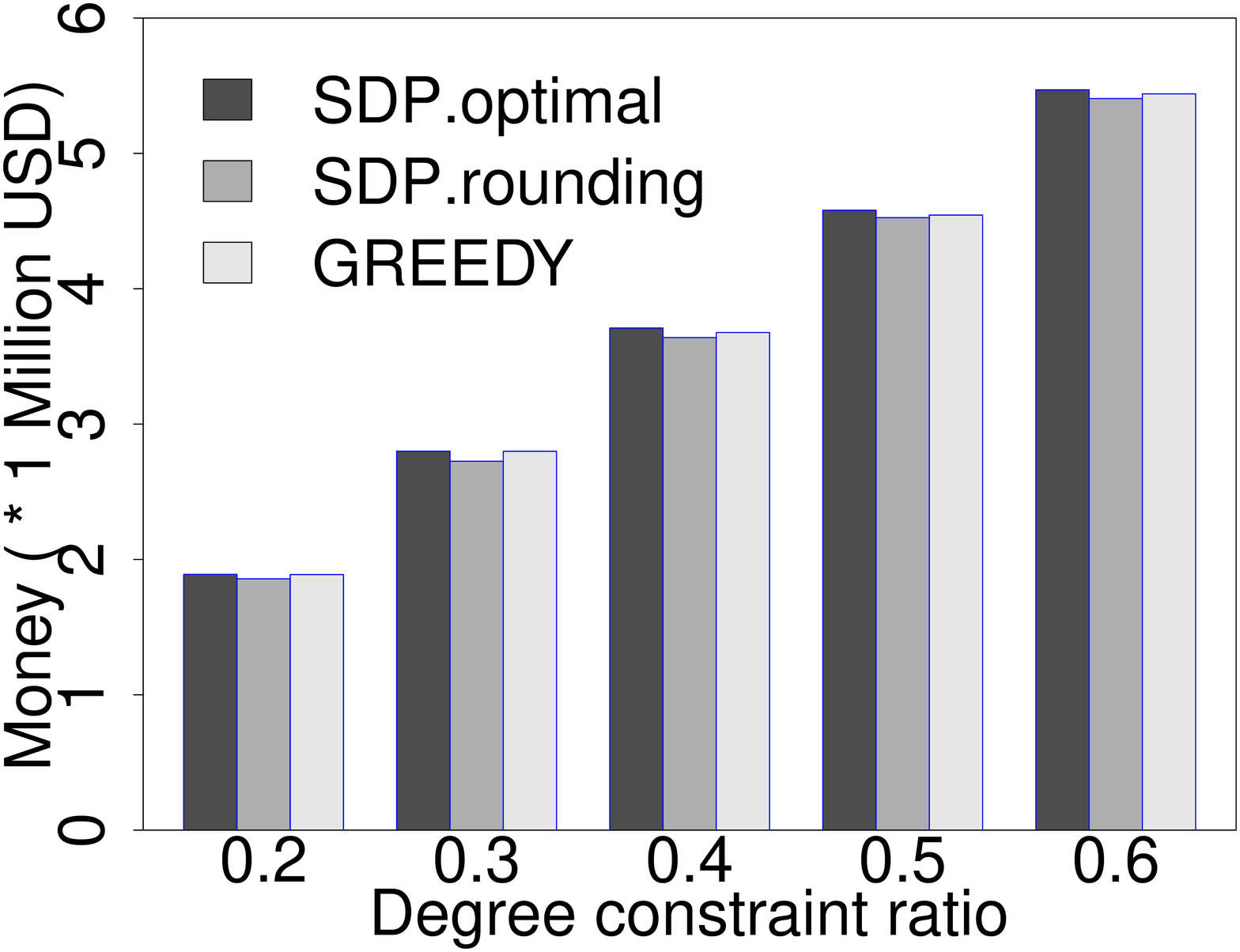}
}
\subfigure[Conflict Pair Ratio: $15\%$]{
\includegraphics[scale=0.17]{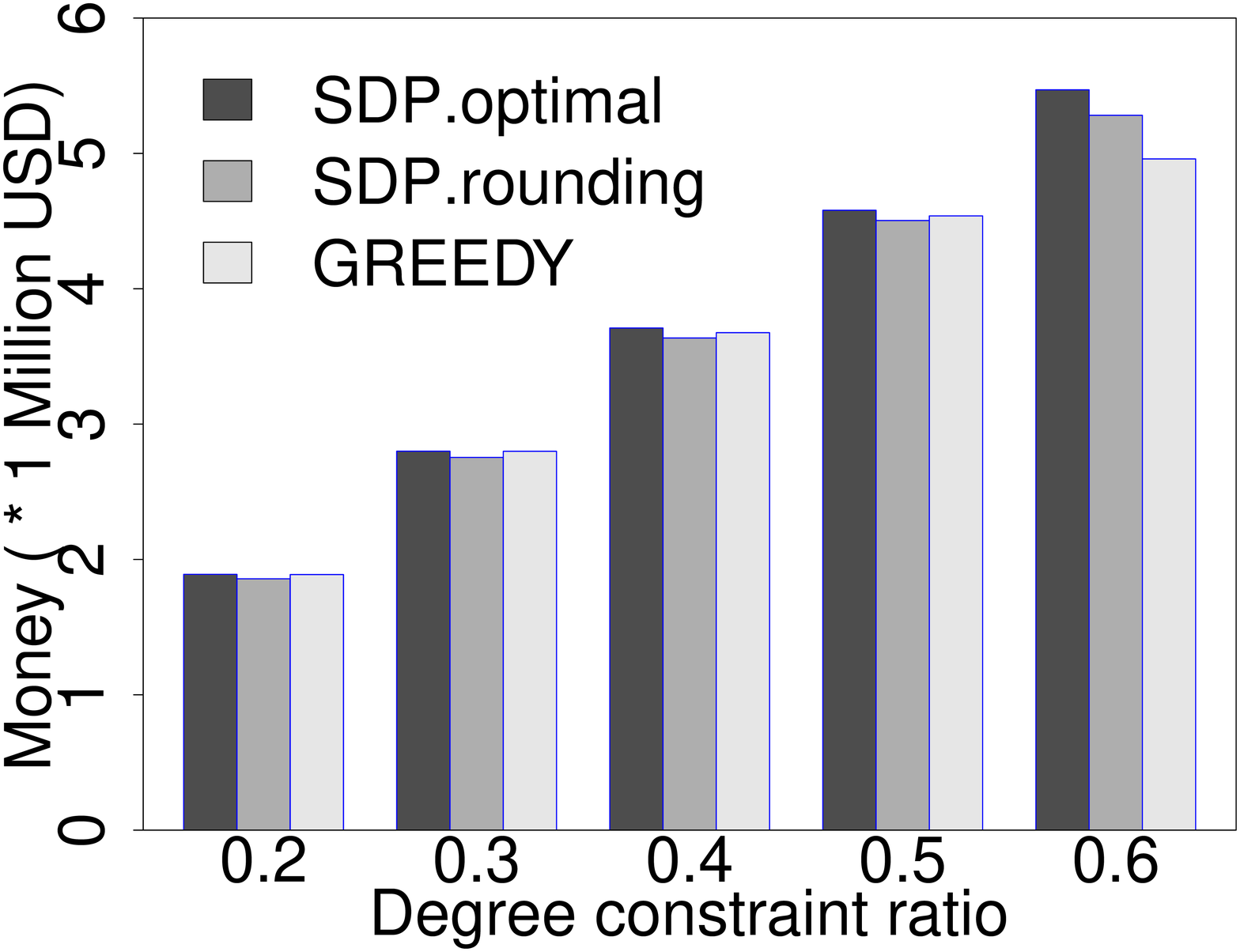}
}
\subfigure[Conflict Pair Ratio: $20\%$]{
\includegraphics[scale=0.17]{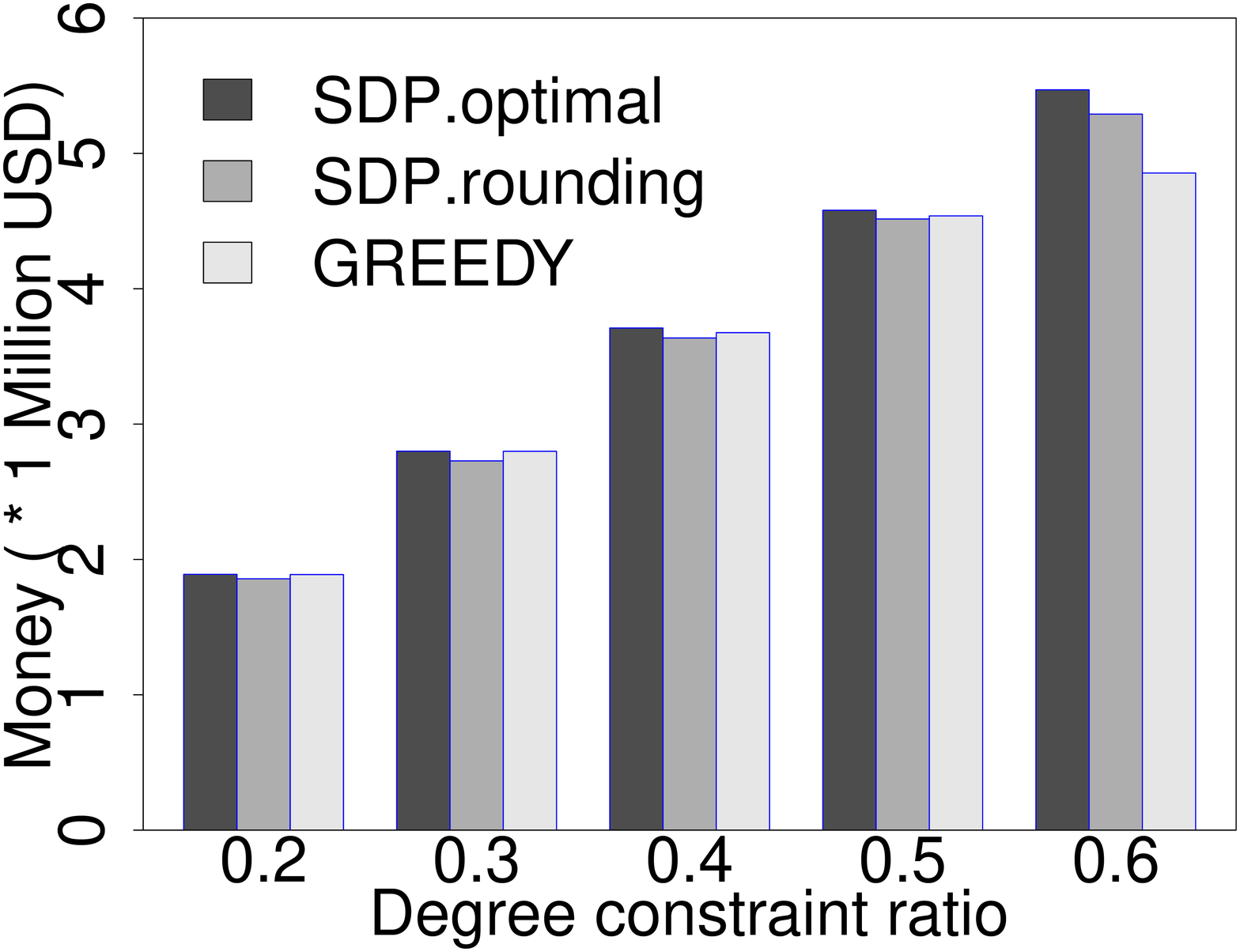}
}
%\subfigure[Conflict Pair Ratio: $60\%$]{
%\includegraphics[scale=0.116]{{P3_C0.6_money_solution}.eps}
%}
\caption{CAC-REC money solution of different conflict pair ratios. When degree constraint ratio and conflict pair ratio are both low, GREEDY shows close to optimal solutions while SDP with rounding is weaker.}
\label{fig:p3_money_solution}
\end{figure*}

\section{Experimental Evaluation}
\label{experiments}
In order to illustrate the proposed algorithms' optimality and scalability, we conducted comprehensive experiments on real-world datasets for {\em C-REC} and {\em CAC-REC}. The datasets (provided by Terapeak\footnote{Terapeak is an E-commerce company, helping sellers on eBay or Amazon to measure and boost their sales performance. http://www.terapeak.ca/}) include three-month transaction records across all categories of eBay Canada in 2013. We note that even though we used data from Terapeak, similar datasets can be easily collected via the eBay API. We use the sales data of a specific category (Cell phones and Accessories) in our study, which contains $18,742$ buyers and $1,884$ sellers after filtering out buyers who only purchased from a single seller and sellers who sold items to no more than $5$ distinct buyers. Then we sort buyers, as well as sellers, by the total monetary purchases and total profits, respectively. The weight on each edge comes from the real-world sales data and will be explained in more detail in the following experiments.

\nop{Very Questionable Statements: Due to the fact that the existing connections (i.e., transactions) between buyers and sellers are sparse, the number of edges is not enough to test our algorithms. Hence, we create edges by ourselves for the illustration purpose. The weights on edges, however, come from the real-world sales data.}

All experiments were run on a 64-bit Ubuntu 12.04 desktop of $3.40$GHz * 8 Intel Core i7 CPU and $3.8$ GB memory.

\subsection{C-REC}
\label{sec:c-rec_exp}
%{\color{red} Since it is guaranteed that {\em C-REC} has integral solutions, we use a fast linear programming (LP) solver,}
We use a fast linear programming (LP) solver, Gurobi\footnote{http://www.gurobi.com/} for Matlab, to solve {\em C-REC} at different scales. The purpose is to observe the scalability of the LP approach.

Adding edges between sorted buyers and sellers, we create four bipartite graphs of different density settings, roughly at $0.5\%$, $1.0\%$, $1.5\%$ and $2.0\%$. Edges are generated in a manner that each seller has the same number of connected buyers. For example, if the density is $0.5\%$, the first seller (top ranked) is connected with the first $90$ buyers (from $1$ to $90$), and the second seller is connected with buyers from 11 to 100, and so on. We also extract three subsets of different sizes for each bipartite graph, i.e., using $25\%$, $50\%$ and $75\%$ of the total number of edges (the number of buyer and seller nodes decreases accordingly). The weight of an edge is the sum of the buyer's total monetary purchases (on all sellers within the category) and the seller's total profits (from all buyers within the category). For each buyer (seller) in the bipartite graph, the degree constraint ratio is the proportion of the maximum number of recommended sellers (buyers) among all candidates. We choose constraint ratios from the set $\{0.1, 0.2, 0.3, 0.4, 0.5\}$ to illustrate their impact on the running time. Every node (buyer or seller) in the experiment has the same degree constraint ratio.

Figure~\ref{fig:p1_time} shows the running time of different experimental settings. The four curves in each subfigure correspond to results of bipartite graphs with different densities, i.e., $0.5\%$, $1.0\%$, $1.5\%$, and $2.0\%$, respectively. The final result is computed as the average of five runs. We make the following observations:
\begin{enumerate}
\item The curve of a higher density is always positioned above the one of a lower density, indicating that it takes a longer time to solve a {\em C-REC} of a higher density. Specifically, the $2.0\%$ density {\em C-REC} requires much more time compared to others of smaller density.
\item The running time increases when we scale up the size of the bipartite graph. For example, the rising curve of the $2.0\%$ density becomes steeper as we add more edges. We observed a similar trend in other density curves, but they are not as obvious as the $2.0\%$ one.
\item The running time slightly changes when the degree constraint ratio becomes larger. For example, the running times of the $2.0\%$ curve of all edges for each degree constraint ratio (subfigures (a), (b), (c), (d), and (e)) are $44.67$, $55.36$, $55.83$, $60.08$ and $62.74$ seconds, respectively. 
\end{enumerate}

From Figure~\ref{fig:p1_time}, we observe that no matter how we change the size of the bipartite graph and degree constraints, it only takes about one minute to solve the largest instance of {\em C-REC}. \nop{Considering that these density values are significantly larger than the one of the real-world data, which is approximately $0.02\%$,}The results suggest that the LP approach for {\em C-REC} is highly scalable. 

\nop{
\subsection{Problem 2}
\begin{figure}[h]
\centering
\subfigure[Complete Graph of 1000 Buyers]{
\includegraphics[scale=0.15]{P2_1000_money_time.eps}
}
\subfigure[Complete Graph of 1500 Buyers]{
\includegraphics[scale=0.15]{P2_1500_money_time.eps}
}
\caption{Problem 2: Money Time}
\label{fig:p2_money_time}
\end{figure}

\begin{figure}[h]
\centering
\subfigure[Complete Graph of 1000 Buyers]{
\includegraphics[scale=0.15]{P2_1000_money_solution.eps}
}
\subfigure[Complete Graph of 1500 Buyers]{
\includegraphics[scale=0.15]{P2_1500_money_solution.eps}
}
\caption{Problem 2: Money Solution}
\label{fig:p2_money_solution}
\end{figure}

\begin{figure}[h]
\centering
\subfigure[Complete Graph of 1000 Buyers]{
\includegraphics[scale=0.15]{P2_1000_rank_time.eps}
}
\subfigure[Complete Graph of 1500 Buyers]{
\includegraphics[scale=0.15]{P2_1500_rank_time.eps}
}
\caption{Problem 2: Money Time}
\label{fig:p2_rank_time}
\end{figure}

\begin{figure}[h]
\centering
\subfigure[Complete Graph of 1000 Buyers]{
\includegraphics[scale=0.15]{P2_1000_rank_solution.eps}
}
\subfigure[Complete Graph of 1500 Buyers]{
\includegraphics[scale=0.15]{P2_1500_rank_solution.eps}
}
\caption{Problem 2: Money Solution}
\label{fig:p2_rank_solution}
\end{figure}
}

\subsection{CAC-REC}
Now we introduce conflict constraints between pairs of buyers. 
\subsubsection{SDP Formulation of CAC-REC}
As discussed in Section~\ref{P3}, the SDP approach of {\em CAC-REC} can be approximately solved by a SDP solver plus a rounding procedure or by a greedy algorithm. We use SDPT3~\cite{Toh99sdpt3--, DBLP:journals/mp/TutuncuTT03} as the SDP solver and we write a Python script to implement the GREEDY algorithm. We compare the solution of each method, optimal SDP (solution without rounding), rounded SDP, and GREEDY.

\begin{figure*}[htb]
\centering
\subfigure[Conflict Pair Ratio: $5\%$]{
\includegraphics[scale=0.17]{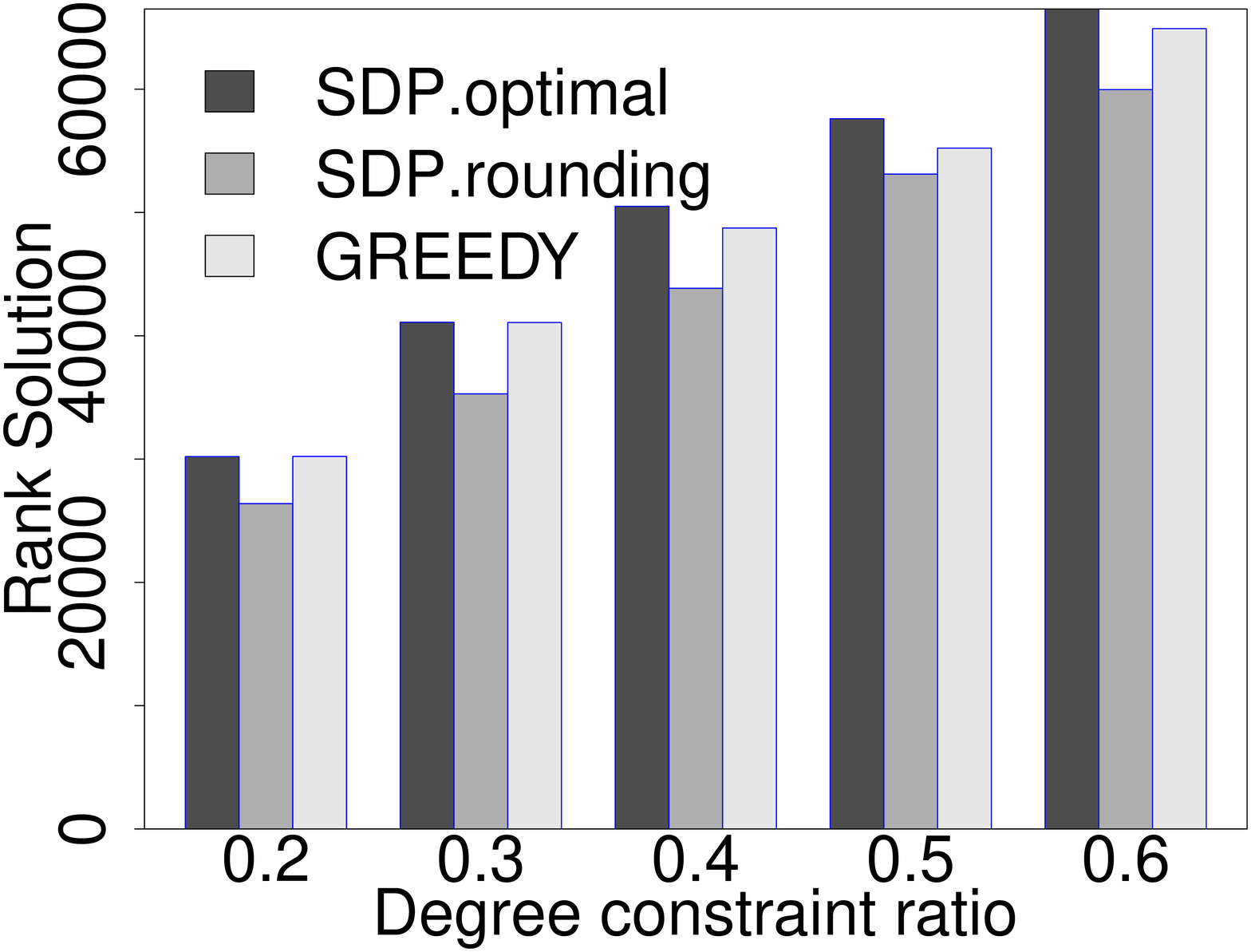}
}
\subfigure[Conflict Pair Ratio: $10\%$]{
\includegraphics[scale=0.17]{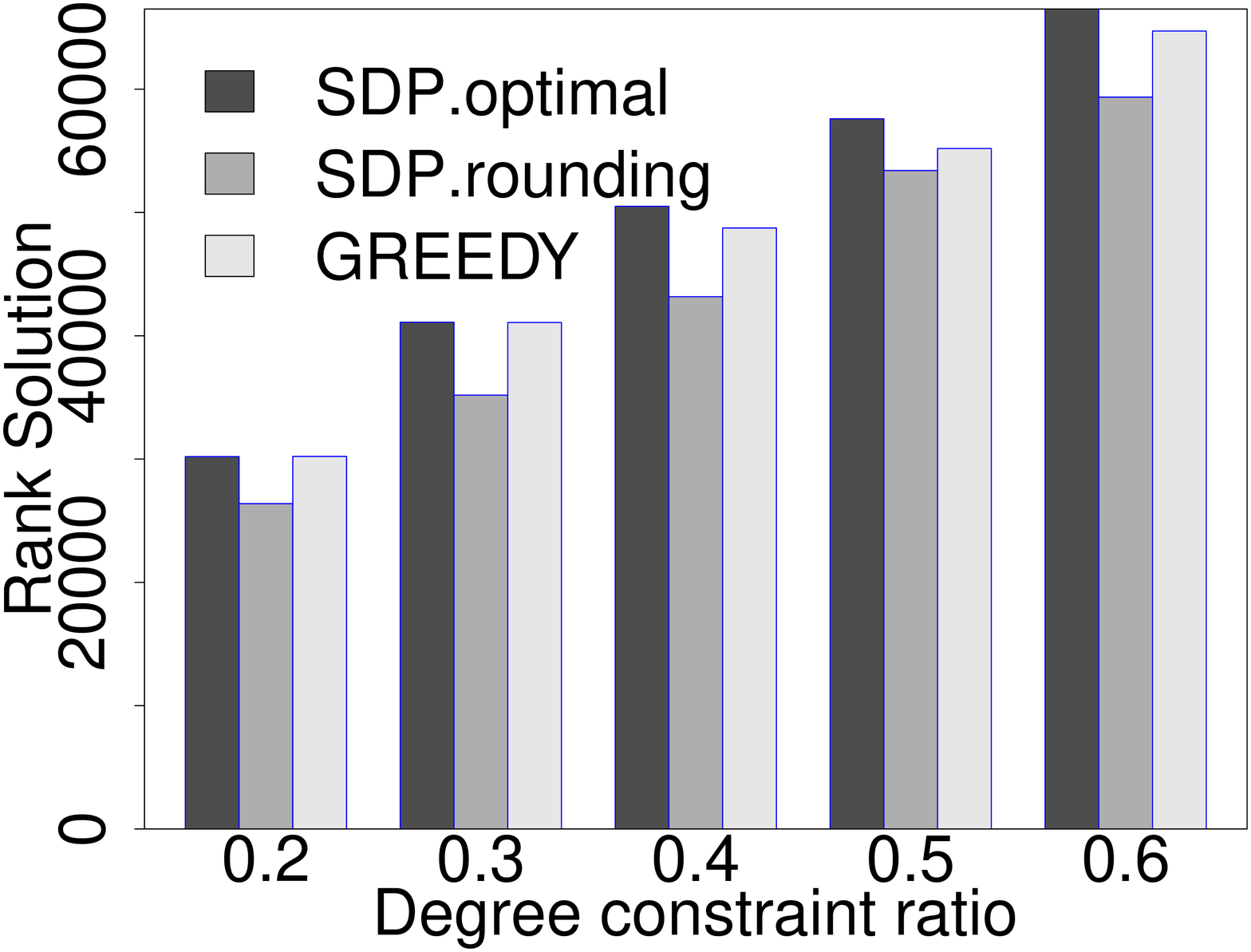}
}
\subfigure[Conflict Pair Ratio: $15\%$]{
\includegraphics[scale=0.17]{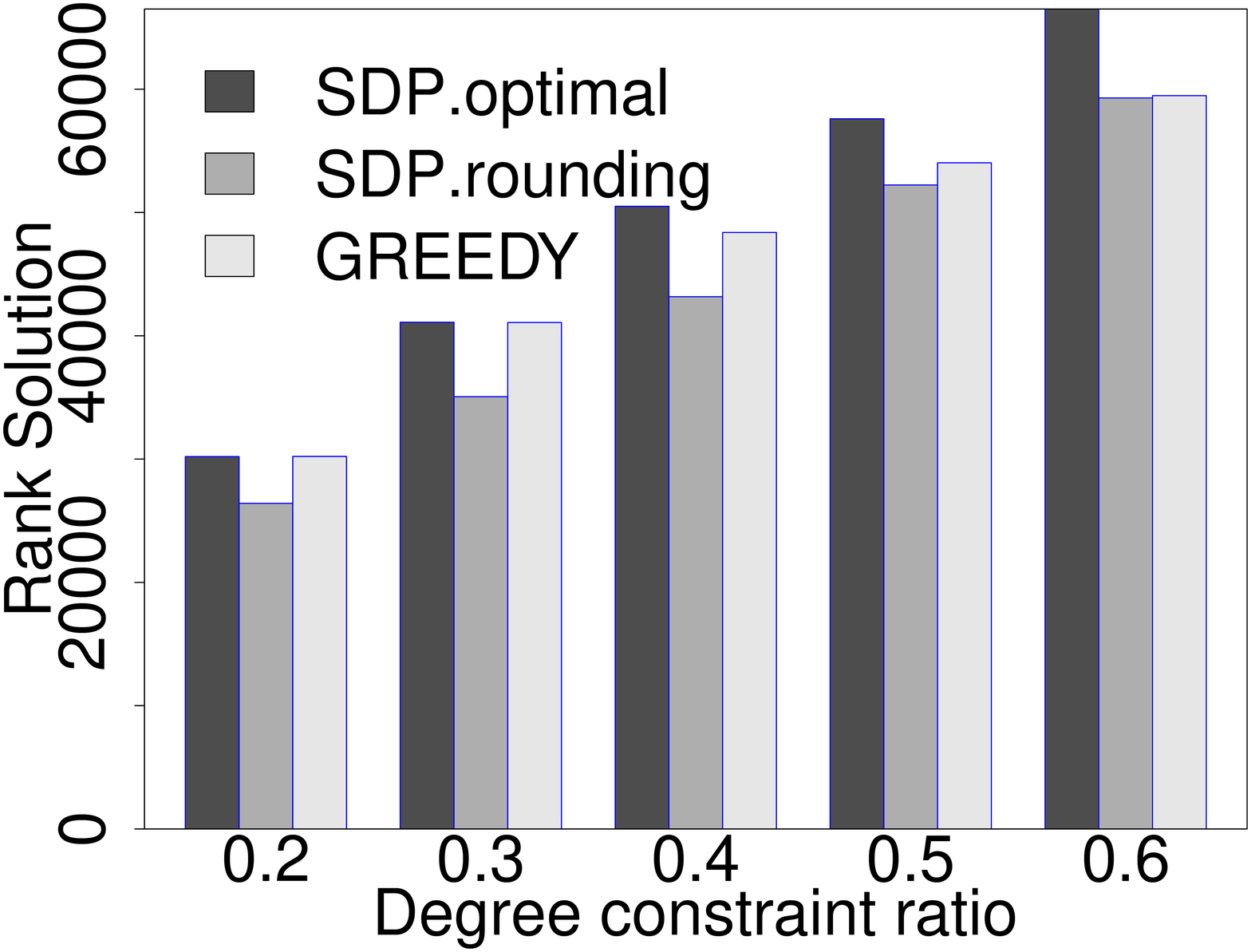}
}
\subfigure[Conflict Pair Ratio: $20\%$]{
\includegraphics[scale=0.17]{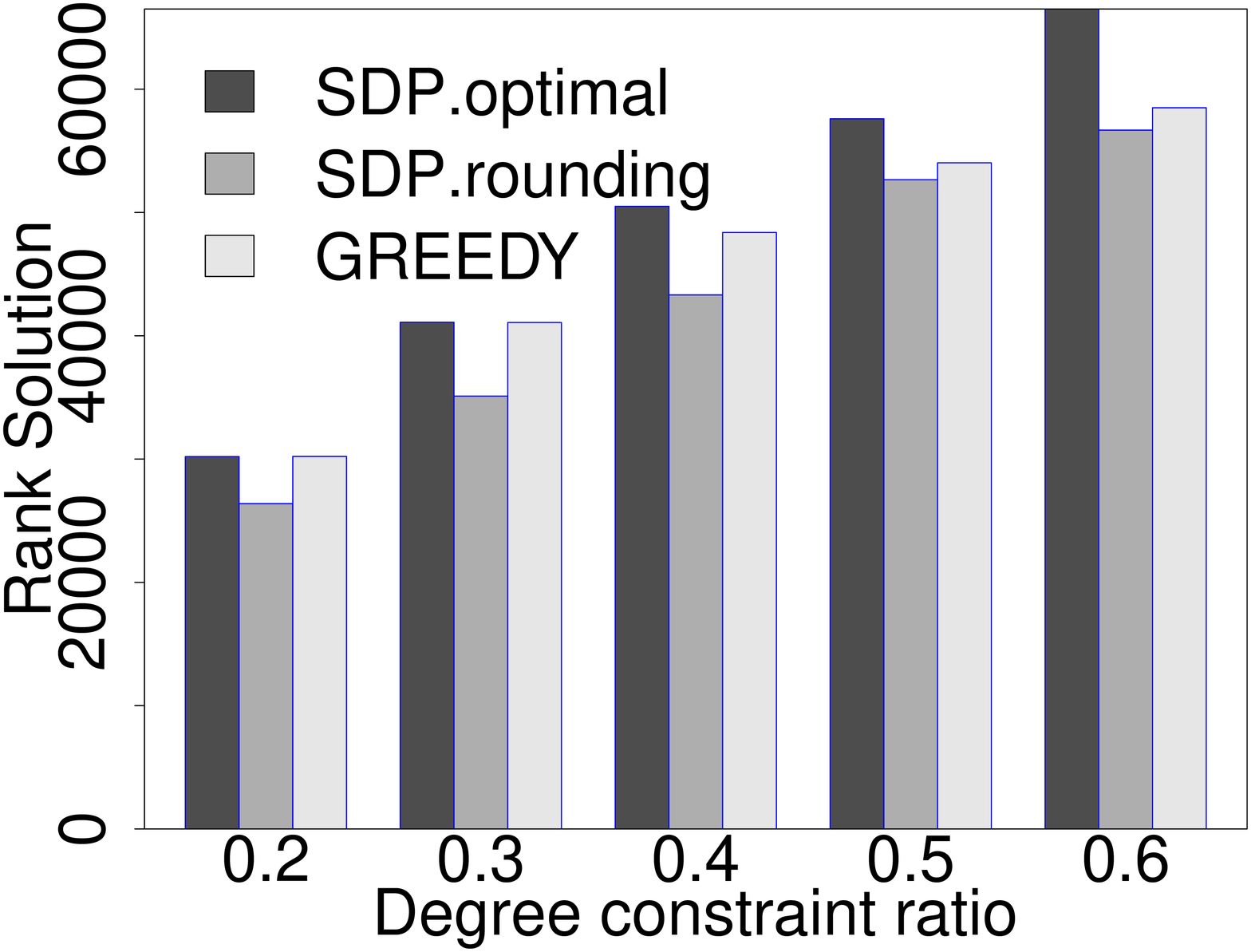}
}
%\subfigure[Conflict Pair Ratio: $60\%$]{
%\includegraphics[scale=0.116]{{P3_C0.6_rank_solution}.eps}
%}
\caption{CAC-REC rank solution of different conflict pair ratios. GREEDY achieves close to optimal performance in all cases. The solutions obtained by SDP with rounding are slightly worse compared to GREEDY.}
\label{fig:p3_rank_solution}
\end{figure*}

The applicability of the SDP approach for {\em CAC-REC} severely suffers from the limitation of physical memory, i.e., the need to store a large-dimensional matrix~\cite{DBLP:journals/mp/KocvaraS07}. %In the real-world scenario, the dimension of the matrix can easily reach a few thousands or even more, which is beyond the capacity of a stand-alone machine. %Although there are research works developing large-scale SDP solvers on multiple processors with distributed memory~\cite{Fujisawa:2012:HGS:2388996.2389122}, we focus on a single computing platform solution in this study. 
%on a dataset of limited size
One way to alleviate the problem is to factorize the bipartite graph into several independent subgraphs, which do not share common buyers or sellers. Then we can run the SDP solver on each subgraph individually and merge all results. This can be achieved considering the way we create the edges between buyers and sellers, i.e., each seller is connected to buyers in a sorted subsequence. In the following, we show results obtained on a small subgraph to avoid repeated comparison.

We create a subgraph of small size consisting of top $5$ sellers and top $26$ buyers (ranked by monetary). Each seller is connected with $10$ buyers so that each buyer can be assigned to multiple sellers. We use two types of edge weight, money and node rank. Money weight is computed in the same way as in the experiment of {\em C-REC}, i.e., the sum of the buyer's total monetary purchases (on all sellers within the category) and the seller's total profits (from all buyers within the category). Of course, any monotone weight function can be used. Rank weight equals the multiplication of a big constant (the total number of buyer and seller nodes in the entire graph, $20,626$ in our case) and the reciprocal of the sum of the buyer's rank and the seller's rank.

We create different number of conflict buyer pairs by randomly sampling the number of buyer pairs by the following percentages of total possible buyer pairs, $\{5\%, 10\%, 15\%, 20\%\}$. Similar to {\em C-REC}, we also set different degree constraint ratios for each node, $\{20\%, 30\%, 40\%, 50\%, 60\%\}$. In addition, we use a constant (``1") for each seller's conflict constraint, which means each seller can at most accept one conflict buyer pair. We use this constant to amplify the impact of conflict constraint on the small-scale subgraph. For SDP with rounding, we run the randomized rounding procedure $20$ times and take the best solution. 

Figure~\ref{fig:p3_money_solution} and Figure~\ref{fig:p3_rank_solution} depict the comparison of three approaches for the money weight and rank weight, respectively. 

In Figure~\ref{fig:p3_money_solution}, we observe that regardless of conflict pair ratio, solutions of the three methods are very similar when degree constraint ratio is smaller than $60\%$, with SDP with rounding being slightly weaker than others. The reason is that when the degree constraint is tight, the conflict constraint of each seller is less likely to be activated, i.e., chances are rare for multiple conflict buyers to be recommended to a seller. The difference arises when the degree constraint ratio and the conflict pair ratio are both weak and the higher conflict pair ratio results in larger performance drop for both SDP with rounding and GREEDY (e.g., comparing the sets of bars of $60\%$ degree constraint ratio in subfigures (c) and (d)). 

%For example, with $60\%$ degree constraint ratio and $20\%$ conflict pair ratio, SDP with rounding and GREEDY achieve $96.8\%$ and $88.8\%$ of the optimal.
% inferior

Figure~\ref{fig:p3_rank_solution} shows a similar performance change trend for the three approaches. For example, the solution difference becomes larger when degree constraints are weaker and the performance of SDP with rounding and GREEDY gradually decreases as the conflict pair ratio increases. Meanwhile, we also observe that the performance of GREEDY is always superior to that of SDP with rounding. \nop{We hypothesize that the edge weight also affects the solution of different methods.} 

Both SDP with rounding and GREEDY achieve close to optimal solutions. Specifically in the experiments, GREEDY exhibits a far superior performance compared to the theoretical analysis. In addition, the most important advantage of GREEDY is that it scales very well even when applied to large-scale datasets. We show its scalability in Figure~\ref{fig:p3_greedy}. We use the same $2.0\%$ density graphs of different size as in the C-REC experiment. The settings of degree constraint ratio, conflict pair ratio are $60\%$ and $10\%$, respectively. For each seller, the conflict threshold $t$ (refer to Section~\ref{P3}) is set to be $50\%$ \nop{???}of the total number of conflicting buyer pairs associated to the seller. 

\begin{figure}[htb]
\centering
\includegraphics[scale=0.23]{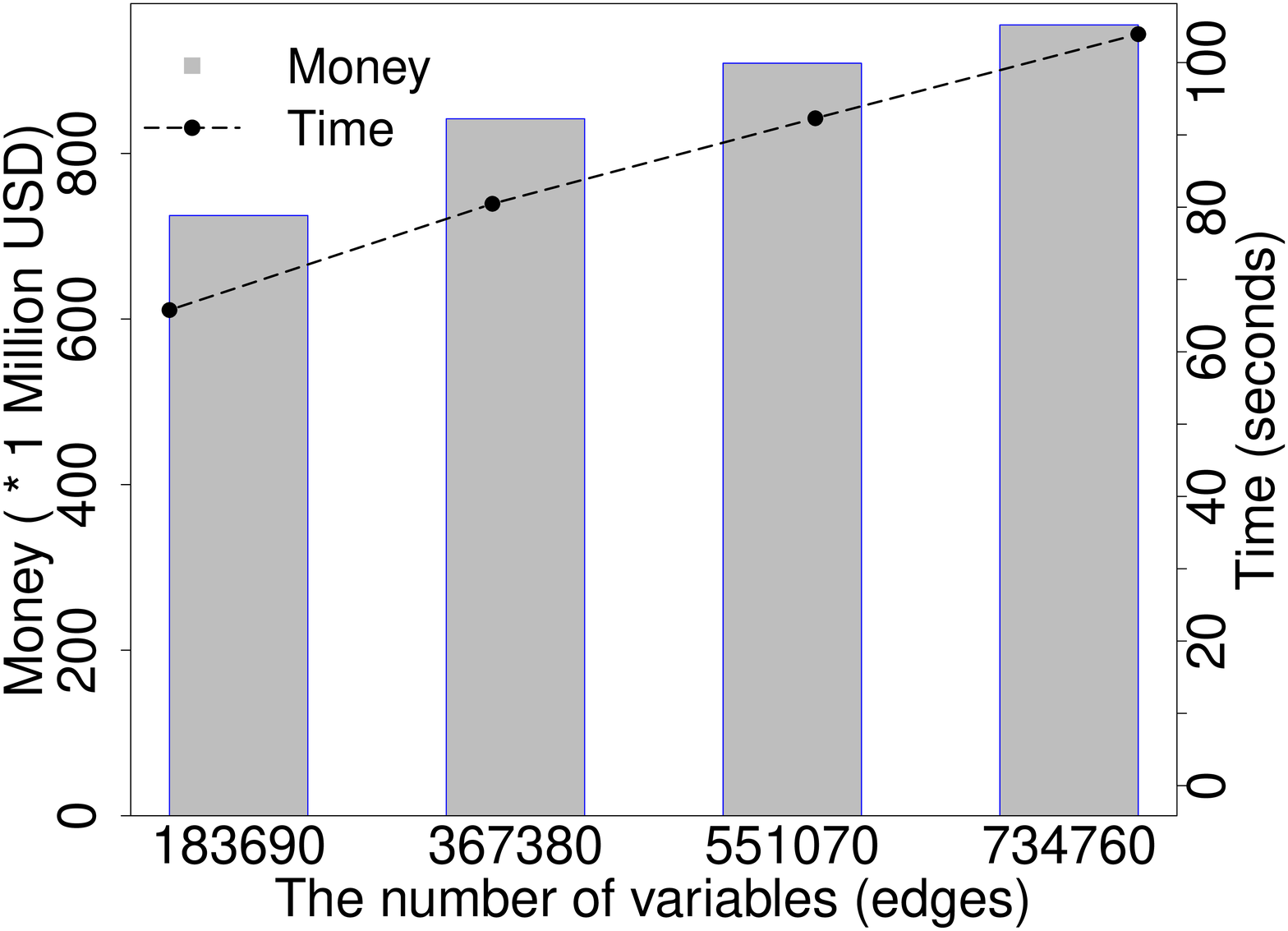}
\caption{Greedy algorithm on large-scale datasets showing its scalability.}
\label{fig:p3_greedy}
\end{figure}

In Figure~\ref{fig:p3_greedy}, the running time increases nearly linearly, and it only requires around $104$ seconds to get a solution when the number of variables (edges) is considerably large, $734,760$!

\subsubsection{ILP Formulation of CAC-REC}
ILP formulation enables us to take full advantage of the LP solver (Gurobi) to solve CAC-REC problems with larger sizes. In this section, we perform larger-scale experiments to compare solutions of different methods, i.e., ILP, LP with rounding and GREEDY.

We create a $0.16\%$-density bipartite graph using the same method described in Section~\ref{sec:c-rec_exp}. The full graph consists of $18742$ buyers, $1884$ sellers and $56520$ edges. We also extract three subsets of different sizes from the full graph, i.e., using $25\%$, $50\%$ and $75\%$ of the total number of edges (the number of buyer and seller nodes decreases accordingly). The degree constraint ratio, conflict pair ratio are $50\%$ and $10\%$, respectively. The conflict threshold $t$ (refer to Section~\ref{P3}) for each seller is set to be $50\%$ of the total number of conflicting buyer pairs associated to the seller. Figure~\ref{fig:lp_cac-rec} shows the solution comparison of different methods on different datasets.

\begin{figure}[htb]
\centering
\subfigure[Money Solution]{
\includegraphics[scale=0.2]{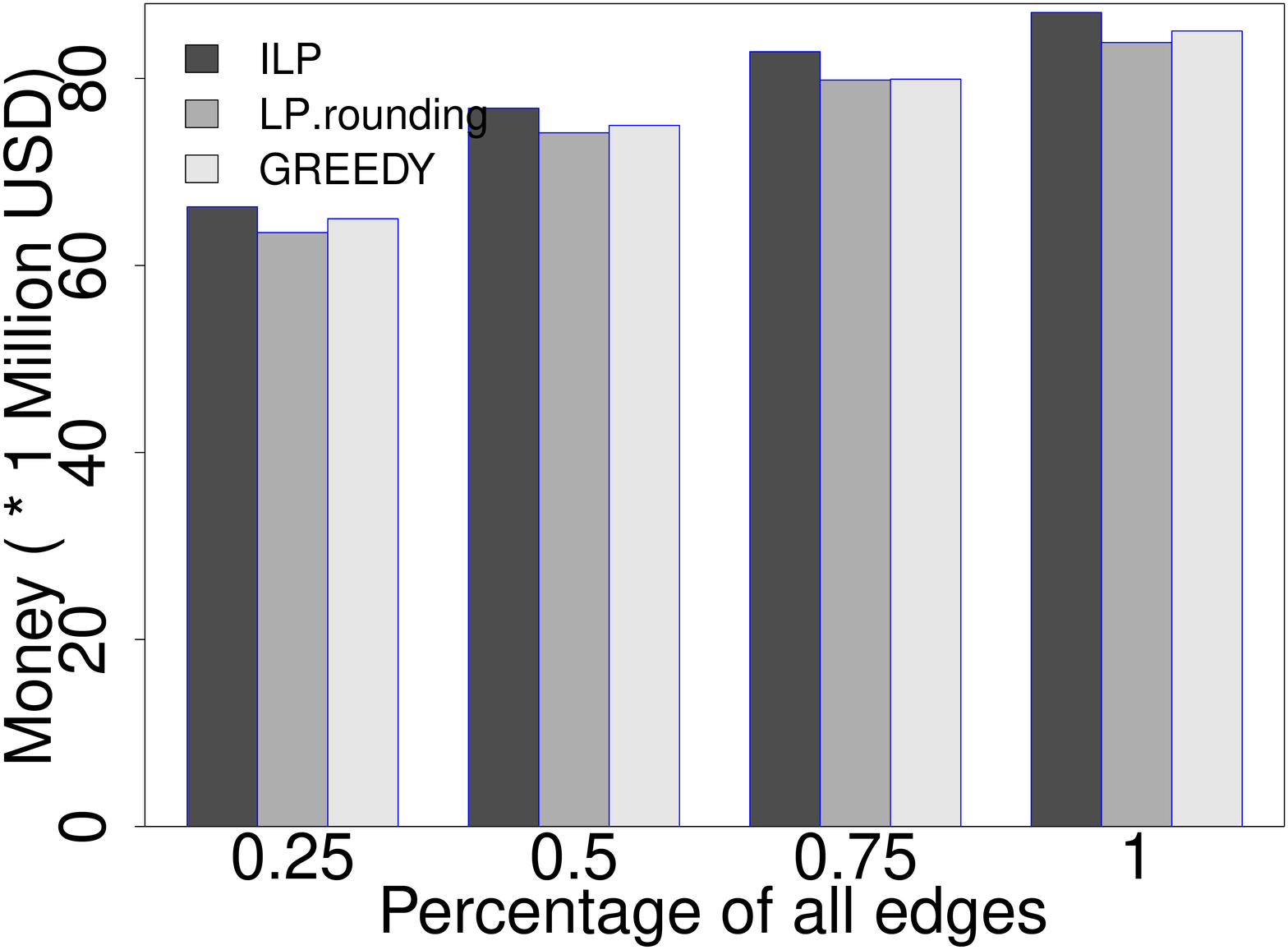}
}
\subfigure[Rank Solution]{
\includegraphics[scale=0.2]{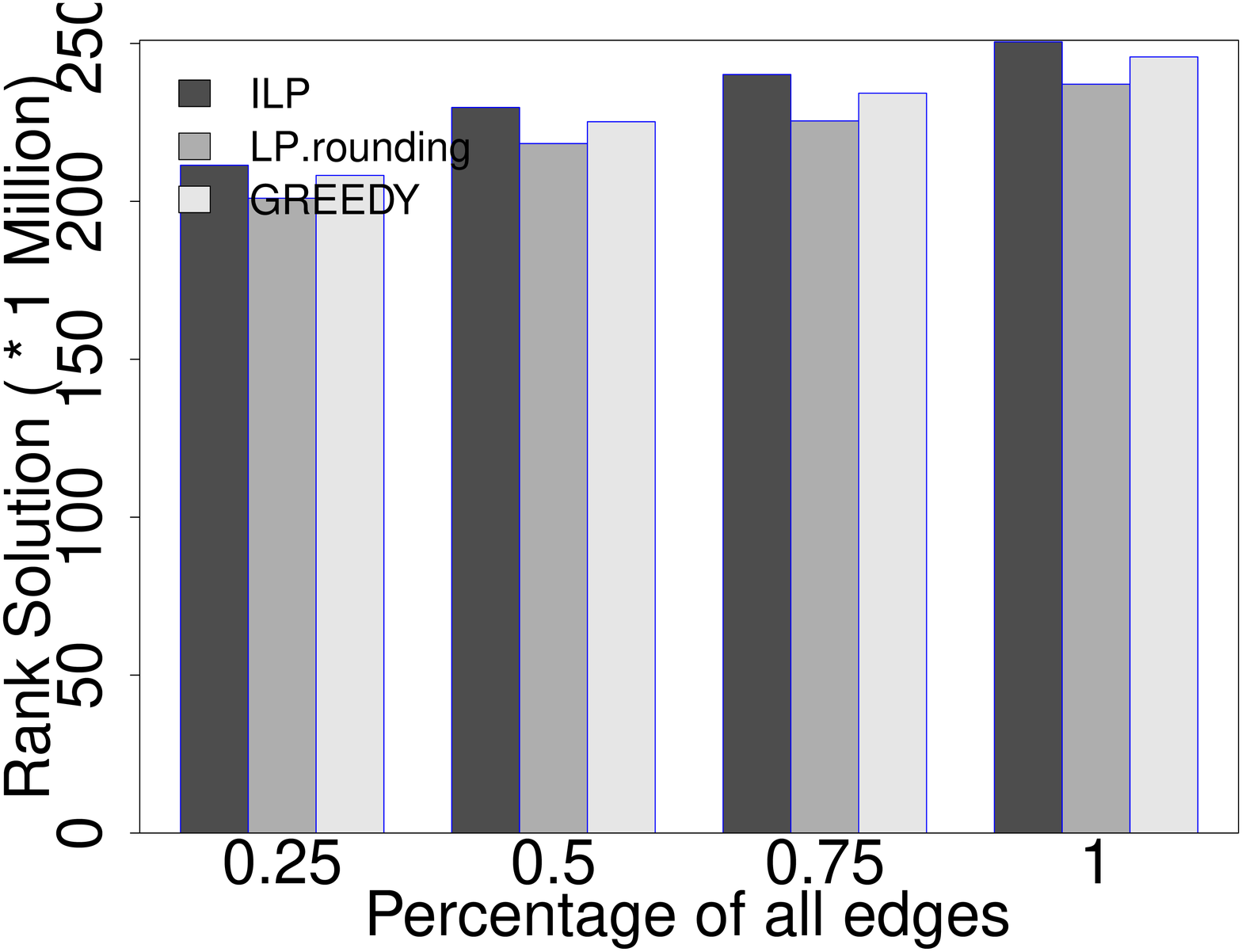}
}
\caption{Linear Program experiments of CAC-REC on large-scale datasets.}
\label{fig:lp_cac-rec}
\end{figure}

Figure~\ref{fig:lp_cac-rec} shows very promising results for LP with rounding and GREEDY algorithms on both money weights and rank weights; they are only slightly worse than the optimal integer solution obtained by ILP. Comparing to the SDP experiments, this experiment also verifies the effectiveness of both LP with rounding and GREEDY on large datasets because the graph we use in this section is considerably larger than the one used for SDP experiments. The density ($0.16\%$) is also $10$ times larger than the real-world graph ($0.016\%$). Therefore, our LP formulation improves the scalability of solving larger-scale CAC-REC problems. In addition, while ILP is NP-hard (running time may be long\footnote{In our experiment in this section, the longest running time of ILP is approximately $10$ minutes.}), we can use the much faster GREEDY method or LP with rounding to improve the efficiency. 

\nop{
I tried the LP approach using different densities. 
For 25% of a 0.5% density graph (0.1 conflict pair ratio), it failed and ran out of memory after several hours. The size of the problem is 775,894 (constraints) * 922,944 (variables, 753,384 of them are Z vars).

But it survived on a 0.16% density graph (0.1 conflict pair ratio), which is 10 times denser than the real-world dataset (0.016%) . Below is the running time for different sizes.

Percentage of edges || Number of edges ||  Size of the problem (constraints * num_vars (num_Z)) || Running time (seconds)
25% ||  14130  ||  26193 * 34680 (20550)   || 4.9438
50% ||  28260  ||  51837 * 68832 (40572)   || 29.9172
75% ||  42390  ||  77911 * 103413 (61023)  || 9.9947
100% || 56520  ||  104573 * 138583 (82063) || 56.9171

So the full bipartite graph with 56,520 edges only requires less than 1 minute. In SDP approach, we cannot make it even when the number of edges is only 100.
The LP formulation indeed improves the scalability of CAC-REC.
}

\section{Conclusions}
\label{conclusions}
We introduced a novel recommendation problem that aims at recommending buyers to sellers (RBS) under capacity and conflict constraints. We provided formal definitions of two types of RBS, C-REC and CAC-REC, addressing different RBS scenarios. We showed that C-REC could be effectively solved using linear programming. By considering the conflict between buyers, however, the complexity of RBS increases significantly. We proved that CAC-REC is NP-hard. Then we proposed a SDP algorithm with a rounding procedure and a greedy algorithm to solve CAC-REC. Our results of extensive experiments using real-world datasets demonstrated that the proposed algorithms can achieve close to optimal solutions. Finally, we showed that the greedy algorithm is highly scalable. %Future work involves continuing to investigate the RBS problem as well as related techniques to improve the scalability of the SDP algorithm.

%\end{document}  % This is where a 'short' article might terminate

%ACKNOWLEDGMENTS are optional
\section*{Acknowledgments}
We would like to thank Terapeak Inc. for providing the eBay data for this research.  

% The following two commands are all you need in the
% initial runs of your .tex file to
% produce the bibliography for the citations in your paper.
%\balance
\bibliographystyle{abbrv}

%\balance
\bibliography{bs}  % sigproc.bib is the name of the Bibliography in this case

% You must have a proper ".bib" file
%  and remember to run:
% latex bibtex latex latex
% to resolve all references
%
% ACM needs 'a single self-contained file'!
%
%APPENDICES are optional
%\balancecolumns

\nop{
\appendix
%\balancecolumns % GM June 2007
% That's all folks!
}

\end{document}